\documentclass[11pt,letterpaper]{article}

\usepackage[margin=1in]{geometry}
\usepackage{setspace}
\usepackage{graphicx}
\usepackage{amsmath,amssymb,amsthm}
\RequirePackage{tgtermes}
\RequirePackage{newtxtext}
\RequirePackage{newtxmath}
\RequirePackage{bm}
\usepackage{booktabs}
\usepackage{enumitem}
\usepackage{array}
\usepackage{multirow}
\usepackage{natbib}
\bibpunct[, ]{(}{)}{,}{a}{}{,}

\numberwithin{equation}{section}
\makeatletter
\newcommand{\RUNAUTHOR}[1]{}
\newcommand{\RUNTITLE}[1]{}
\newcommand{\TITLE}[1]{\gdef\theTITLE{#1}}
\newcommand{\ARTICLEAUTHORS}[1]{\gdef\theARTICLEAUTHORS{#1}}
\newcommand{\AUTHOR}[1]{\par\noindent\textbf{#1}\par}
\newcommand{\AFF}[1]{\noindent{\small #1}\par\vspace{0.55em}}
\newcommand{\EMAIL}[1]{\texttt{#1}}
\newcommand{\ABSTRACT}[1]{\gdef\theABSTRACT{#1}}
\newcommand{\KEYWORDS}[1]{\gdef\theKEYWORDS{#1}}
\newcommand{\ECSwitch}{\clearpage\appendix}
\newcommand{\ECHead}[1]{\section*{#1}\addcontentsline{toc}{section}{#1}}
\renewcommand{\maketitle}{%
  \begin{center}
    {\LARGE\bfseries \theTITLE\par}
  \end{center}
  \vspace{0.7em}
  \begin{minipage}{\textwidth}
    \theARTICLEAUTHORS
  \end{minipage}
  \vspace{0.8em}
  \begin{abstract}
    \theABSTRACT
  \end{abstract}
  \noindent\textbf{Keywords:} \theKEYWORDS\par\vspace{1em}
}
\makeatother
\theoremstyle{plain}
\newtheorem{theorem}{Theorem}[section]
\newtheorem{proposition}[theorem]{Proposition}

\newtheorem{corollary}[theorem]{Corollary}
\newtheorem{assumption}[theorem]{Assumption}
\newtheorem{definition}[theorem]{Definition}
\theoremstyle{remark}
\newtheorem{remark}[theorem]{Remark}

\newcommand{\R}{\mathbb{R}}
\newcommand{\E}{\mathbb{E}}
\newcommand{\Var}{\operatorname{Var}}
\newcommand{\tr}{\operatorname{tr}}
\DeclareMathOperator*{\argmin}{arg\,min}

\begin{document}

\RUNAUTHOR{Zhou et al.}
\RUNTITLE{A strictly proper scoring-rule theory for stochastic car-following models}
\TITLE{A Strictly Proper Scoring-Rule Theory for Calibrating Stochastic Car-Following Models}

\ARTICLEAUTHORS{%
\AUTHOR{Shirui Zhou}
\AFF{Institute of Systems Engineering, College of Management and Economics,
Tianjin University, Tianjin 300072, China; Laboratory of Computation and
Analytics of Complex Management Systems (CACMS), Tianjin University,
Tianjin 300072, China}
\AUTHOR{Shiteng Zheng}
\AFF{School of Systems Science, Beijing Jiaotong University, Beijing 100044, China}
\AUTHOR{Junzhe Ding}
\AFF{Institute of Systems Engineering, College of Management and Economics,
Tianjin University, Tianjin 300072, China; Laboratory of Computation and
Analytics of Complex Management Systems (CACMS), Tianjin University,
Tianjin 300072, China}
\AUTHOR{Rui Jiang}
\AFF{School of Systems Science, Beijing Jiaotong University, Beijing 100044, China}
\AUTHOR{Junfang Tian\textsuperscript{*}}
\AFF{Institute of Systems Engineering, College of Management and Economics,
Tianjin University, Tianjin 300072, China; Laboratory of Computation and
Analytics of Complex Management Systems (CACMS), Tianjin University,
Tianjin 300072, China, \EMAIL{jftian@tju.edu.cn}. \textsuperscript{*}Corresponding author.}
}

\ABSTRACT{%
\textbf{Problem definition:} Fixed parameters and inputs in a stochastic simulator induce a distribution over complete trajectories, not one trajectory. Calibration must assess this distribution, including variability and temporal dependence, against observations. Yet stochastic car-following models are commonly calibrated with trajectory-error objectives inherited from deterministic modelling. \textbf{Methodology/results:} We establish a scoring-rule theory of stochastic calibration. Strict propriety requires the data-generating distribution to uniquely minimise expected score. MRMean-I, the average run-wise error, drives separable stochastic spread to zero; MRMean-II, the error of the ensemble-mean trajectory, cannot identify a parameter that changes only spread; and MRMin, the error of the closest simulated run, has a population target that changes with ensemble size. These results are confirmed for stochastic Intelligent Driver Model extensions with additive acceleration noise and random desired headway. We recommend exact maximum likelihood when the correct transition density is available; otherwise, an unbiased simulation-based estimator of a strictly proper score. The energy score meets this requirement and gives the best held-out distributional prediction among the evaluated simulation-based objectives, although both models retain too-narrow bands and miss persistent disturbances. \textbf{Implications:} Strict propriety separates a valid calibration target from parameter identifiability and model adequacy. The theory applies to vector-valued outputs from stochastic transportation simulators; the car-following experiments illustrate its scope.
}

\KEYWORDS{traffic simulation; calibration; proper scoring rules; stochasticity; energy score}

\maketitle

\section{Introduction}
\label{sec:introduction}

Traffic simulation supports the design and evaluation of vehicle control, infrastructure, and traffic-management policy when field experiments are costly, unsafe, or impossible \citep{punzo2021about}. Transportation Science has contributed calibration methods for dynamic traffic simulators and car-following models using both aggregate and trajectory data \citep{flotterod2011bayesian,rakha2009calibrating,keane2021fast}. A deterministic simulator maps a parameter vector to one predicted outcome. A stochastic simulator instead maps the same vector to a predictive distribution $P_\beta$: repeated runs under identical inputs can produce different trajectories, link states, or network outcomes \citep{osorio2019efficient,punzo2020twolevel}. The distribution is the model's prediction. It includes not only a mean path, but also variation, tails, and temporal or cross-component dependence. An observed path is one realisation from the unknown data-generating distribution $P_0$, not a direct observation of all of its features.

Consequently, calibrating a stochastic simulator is not the deterministic task of matching one predicted trajectory to one observed trajectory. It is the task of selecting parameters for which $P_\beta$ is an appropriate predictive distribution for outcomes drawn from $P_0$. Two parameter vectors can give the same mean trajectory while assigning different probabilities to large gaps, stop-and-go waves, or persistent accelerations. A point-error criterion cannot distinguish them if it discards those features. This distinction matters because safety depends on distributional tails \citep{mahmud2017application}, congestion depends on the propagation of variation \citep{laval2014parsimonious,tian2016empirical}, and emissions estimates depend on how acceleration variability evolves through a platoon \citep{darocha2015emissions,zhou2023emissions}. A calibration objective can therefore fit the mean and converge numerically while still favouring a predictive distribution that is too narrow or otherwise wrong.

Stochastic car-following (CF) models make this problem especially transparent and especially difficult. CF is a basic microscopic component of traffic simulation \citep{brackstone1999historical}, but its trajectory is generated recursively: each simulated state depends on earlier simulated states. A random acceleration or headway innovation therefore propagates through time, creating a high-dimensional distribution with temporal dependence rather than independent pointwise errors \citep{punzo2016speed}. In field data, this within-driver run-to-run variability is also observed through measurement error and alongside stable heterogeneity between drivers. Moreover, a stochastic parameter may change only predictive spread, or may change mean and spread together. Matching one realised path can thus confound model adequacy, parameter information, and the rule used to compare a simulated ensemble with data. Recent Transportation Science work similarly notes that validation of a stochastic microscopic model cannot rely on either an arbitrary parameter realisation or the best-matching simulated trajectory \citep{zhao2022microscopic}. Existing stochastic CF mechanisms span noise-driven IDM variants, random-headway models, two-regime dynamics, and Langevin formulations \citep{treiber2017idmstochasticity,zhou2025calibration,xu2019tworegime,ngoduy2019langevin}, but their multi-run calibration objectives have not been assessed against a common distributional validity criterion.

The missing ingredient is a criterion for judging a predictive distribution against a realised outcome. We formulate calibration as the choice of a \emph{scoring rule}: given a candidate distribution and an observation, the rule assigns a penalty, and calibration selects the parameter vector with the smallest penalty. A scoring rule is \emph{strictly proper} when, over indefinitely repeated observations from $P_0$, the true distribution uniquely attains the smallest expected penalty \citep{gneiting2007strictly}. Strict propriety is therefore the universal standard for distributional calibration: it asks whether a criterion targets the correct predictive distribution independently of the optimiser, a particular dataset, or a numerical choice such as the number of simulator runs. It does not guarantee that the simulator can represent $P_0$, nor that different parameter vectors can be distinguished when they generate nearly identical distributions; those are model-adequacy and identifiability questions, respectively.

This standard lets us derive the failures of three objectives used in recent stochastic CF calibration \citep{zhou2025calibration}. MRMean-I averages the errors of individual simulation runs; MRMean-II calculates the error of their average trajectory; and MRMin retains the error of the closest run. All appear to be harmless ways to reduce an ensemble to a scalar objective, but they score different aspects of $P_\beta$. We show that MRMean-I penalises separable stochastic spread, MRMean-II is uninformative about a parameter that changes only spread, and MRMin changes its population objective and minimiser when the ensemble size $N$ changes. The first two methods are therefore not merely noisy estimators of a common target, and a favourable fixed-$N$ parameter-recovery exercise does not establish MRMin as a general rule.

The appropriate alternative depends on what the simulator makes available. If its state-transition density can be evaluated, exact maximum likelihood estimation (MLE) directly uses the probability law that generates the data \citep{xu2020statistical,ngoduy2019langevin}. This is distinct from an approximate residual likelihood that posits, for example, independent Gaussian trajectory errors; poor performance of the latter is evidence against that residual model, not against exact MLE. If the simulator is a black-box generator and its output density is unavailable, the calibration objective should be an unbiased estimator of a proper score. We use the multivariate energy score, whose pairwise ensemble term rewards spread only when it improves the distributional prediction \citep{szekely2013energy,gneiting2008assessing}. Section~\ref{sec:theory} gives the formal definitions and population calculations.

\subsection{From deterministic trajectory fitting to distributional calibration}
\label{sec:related-work}
\label{sec:rw-deterministic}
\label{sec:rw-stochastic}
\label{sec:rw-diagnostics}
\label{sec:rw-scoring}
\label{sec:rw-sbi}

Deterministic CF calibration has established the importance of the measured variable, data quality, numerical implementation, parameter sensitivity, and optimisation \citep{punzo2016speed,punzo2021about,rakha2009calibrating,keane2021fast}. Those results remain essential, but their usual point-to-point loss presumes that a parameter vector determines one trajectory. Stochastic CF mechanisms were introduced to reproduce variability and oscillation phenomena that a deterministic path cannot describe \citep{treiber2017idmstochasticity,tian2019speedadaptation,xu2019tworegime, ngoduy2019langevin}. Once a model generates an ensemble, however, the measure of performance and optimiser no longer determine the estimand: averaging run-wise errors, scoring the ensemble mean, and retaining the closest run are different implicit scoring rules. Table~\ref{tab:rw-implicit-scores} records this reclassification.

The closest traffic-specific precursor is \citet{zhou2025calibration}, which showed that averaging run-wise errors can suppress stochasticity and reported favourable fixed-$N$ recovery for MRMin. We retain that empirical starting point but change the question from recovery on one finite design to the population objective and minimiser of each rule. This separates the two MRMean definitions and shows when a reported parameter can be an artefact of the simulation ensemble rather than an implication of the data. Table~\ref{tab:progression} summarises the distinction.

This framing connects stochastic traffic calibration to probabilistic forecast evaluation, where an ensemble is routinely assessed against a realised outcome using proper scoring rules \citep{gneiting2007probabilistic,gneiting2008assessing, jordan2019scoringrules}. It also clarifies validation. Held-out point error may improve simply because a model contracts its predictive distribution, so repeated observations are needed to distinguish within-driver stochastic variation from stable between-driver heterogeneity \citep{punzo2020twolevel}. We use the variogram score as a complementary diagnostic for temporal dependence, which the energy score alone is less sensitive to \citep{scheuerer2015variogram}. The same information principle is consistent with simulation-based inference: likelihood is preferred when it is available; simulation-only methods still need a well-defined population objective \citep{cranmer2020frontier}. In particular, increasing $N$ should reduce Monte Carlo error, not change the distribution that calibration is intended to select.

The paper makes four contributions. First, it establishes strict propriety as the universal validity criterion for simulation-based distributional calibration. Second, it derives, at the population level, why MRMean-I, MRMean-II, and MRMin fail that criterion in three different ways. Third, it turns the theory into a decision rule: use exact MLE when the correct transition density is available and otherwise use an unbiased estimator of a proper simulation-based score. The energy score is an existing proper score; our contribution is to establish its role in this calibration framework. Fourth, controlled experiments and repeated trajectories from 11 drivers test the theoretical signatures, held-out distributional prediction, and the temporal features that the fitted stochastic mechanisms still miss.

The theoretical statements concern distributions over vector-valued simulator outputs, not one CF equation. They apply whenever repeated runs at a parameter vector define a predictive distribution over a finite-dimensional output, such as a trajectory, link state, origin--destination cell, or safety measure. QIDM, an Intelligent Driver Model (IDM) with additive Gaussian white-noise acceleration of intensity $Q$, and the two-dimensional IDM (2D-IDM), which randomises desired time headway, are two canonical but structurally distinct stochastic mechanisms: the former perturbs instantaneous acceleration, whereas the latter randomises a behavioural control variable. They are controlled validation vehicles rather than boundaries on the theory's scope.

The remainder of the paper is organised as follows. Section~\ref{sec:theory} develops the scoring-rule theory and the energy-score alternative. Section~\ref{sec:synthetic} tests its analytic signatures under controlled synthetic truths. Section~\ref{sec:experimental} uses repeated trajectories for predictive validation, comparison of stochastic signatures, and parameter uncertainty. Section~\ref{sec:conclusions} gives practical guidance and limitations.

\section{A scoring-rule theory of stochastic calibration}
\label{sec:theory}

\subsection{Notation and setup}
\label{sec:theory-notation}

For the application in this paper a measure-of-performance (MoP) is spacing, and a
single realisation is a trajectory $y \in \R^K$, where $K$ is the number of discrete
time steps. The theory uses only that a transportation simulator produces a predictive
distribution over a vector-valued output: the components may instead index links,
vehicles, origin--destination cells, performance measures, or their concatenation.
We write $P_\beta$ for the distribution induced by a stochastic simulator
at parameter vector $\beta$, with mean $\mu_\beta = \E_{X \sim P_\beta}[X] \in \R^K$
and covariance $\Sigma_\beta = \Var_{X \sim P_\beta}(X) \in \R^{K \times K}$. We write
$P_0$ for the (unknown) true data-generating distribution, with mean $\mu_0$ and
covariance $\Sigma_0$. An observed trajectory is $y \sim P_0$. We write
$X, X' \stackrel{\mathrm{iid}}{\sim} P_\beta$, both independent of $y$, and
$X_1,\dots,X_N \stackrel{\mathrm{iid}}{\sim} P_\beta$ for a set of $N$ independent
simulation runs at parameter $\beta$.

Here and throughout, \emph{population} does not mean the 11 drivers. It means the
probability distribution revealed by repeating the same experiment indefinitely.
A population objective therefore averages over a new trajectory $Y\sim P_0$ and,
where needed, new simulator randomness. Increasing the number $N$ of simulated runs
for one observed trajectory does not create this population \citep{newey1994large}.

Three parameter values must be kept distinct. If the model is correctly
specified, $\beta_0$ denotes a data-generating parameter satisfying
$P_0=P_{\beta_0}$. For a scoring rule $S$, the population minimiser is
\[
  \beta_S^*\in\argmin_\beta \E_{Y\sim P_0}S(P_\beta,Y),
\]
or $\beta_{S,N}^*$ when the population objective itself depends on ensemble size.
Finally, $\widehat\beta_{n,N}$ is the value obtained from the available data: $n$
observed trajectories and $N$ simulations per score evaluation. We call
$\beta_S^*$ (or $\beta_{S,N}^*$) the \emph{estimand}, meaning the value that the
population objective selects on average. If this minimiser changes with $N$, the criterion itself
has changed its answer. Random variation of $\widehat\beta_{n,N}$ around a fixed
target is instead ordinary estimation error. Neither changes the true value
$\beta_0$.

Throughout, goodness-of-fit (GoF) is mean squared error, i.e.\ the loss is built from
$\|\cdot\|^2$, the squared Euclidean norm on $\R^K$. We state this convention once,
explicitly, because every impropriety result in Section~\ref{sec:theory-improper}
is a direct consequence of this specific choice, and Remark~\ref{rem:gof-generalises}
addresses what changes if a different GoF is substituted.

\subsection{Calibration as the choice of a scoring rule}
\label{sec:theory-scoring}

The traditional description of CF calibration proceeds by fixing a measure of
performance (typically spacing or speed) and a goodness-of-fit statistic, then
choosing $\beta$ to minimise the discrepancy between a simulated trajectory and an
observed one. This ``MoP + GoF'' decomposition is standard for deterministic CF
models, where a single parameter vector $\beta$ produces a single simulated
trajectory and the comparison is genuinely point-to-point. For a stochastic CF
model, however, $\beta$ does not produce a trajectory; it produces a distribution
$P_\beta$ over trajectories. The MoP + GoF decomposition presumes a point-to-point
comparison that a stochastic model does not, by construction, supply, and any
procedure that forces one (for example, by comparing $y$ to a single simulated
draw, to the mean of several draws, or to the closest of several draws) is making an
implicit and consequential choice about how a distribution is to be scored against
a realisation.

We make this choice explicit. Calibration is the problem
\begin{equation}
  \hat\beta = \argmin_\beta S(P_\beta, y),
  \label{eq:calibration-as-scoring}
\end{equation}
where $S(P, y)$ is a \emph{scoring rule}: a function assigning a real-valued
penalty to a predictive distribution $P$ given a realised observation $y$. Every
calibration method in current use for stochastic CF models is an instance of
\eqref{eq:calibration-as-scoring} for a particular, usually implicit, choice of
$S$. Section~\ref{sec:theory-improper} identifies the $S$ implicit in each of the
three examined methods and analyses their behaviour; Section~\ref{sec:theory-es}
proposes a replacement.

One observed trajectory can favour an incorrect distribution by chance. The relevant
question is therefore what happens on average over repeated observations. A valid
calibration score must give the true distribution no larger an expected penalty than
any alternative. Strict propriety further requires every incorrect alternative to
have a larger expected penalty.

\begin{definition}[Propriety]
A scoring rule $S$ is \emph{proper} relative to a class $\mathcal{P}$ of predictive
distributions if, for every $P_0 \in \mathcal{P}$,
\[
  \E_{y \sim P_0}\, S(P_0, y) \;\le\; \E_{y \sim P_0}\, S(P, y)
  \qquad \text{for all } P \in \mathcal{P}.
\]
It is \emph{strictly proper} if the inequality is strict whenever $P \ne P_0$.
\end{definition}

Strict propriety is precisely the condition needed for \eqref{eq:calibration-as-scoring}
to recover $P_0$ as the unique population-optimal \emph{distribution}. In parameter
space, the population argmin is the equivalence class
$\{\beta:P_\beta=P_0\}$; it reduces to the singleton $\{\beta_0\}$ only when the
simulator family is identifiable. A scoring rule that is proper but not strict admits
minimisers other than $P_0$; a scoring rule that is not proper at all can
prefer a wrong distribution to the true one in expectation, regardless of
how much data is available. The thesis of this paper is that \emph{the
decisive criterion for distributional calibration is strict propriety, and
the three examined methods fail it in different ways}: one by an
impropriety that drives the dispersion parameters to a boundary
(Section~\ref{sec:mrmean1}), one by an impropriety that fails to identify
them at all (Section~\ref{sec:mrmean2}), and one by an impropriety that
makes the calibrated estimate depend on a nuisance quantity --- the number of
simulation runs, $N$ --- that has no counterpart in the estimand
(Section~\ref{sec:mrmin}).

\begin{remark}[Propriety is not identifiability]
\label{rem:propriety-not-identifiability}
Strict propriety says which \emph{distribution} wins on average: if the true
distribution $P_0$ is among the candidates, it is the unique minimiser. It does not
guarantee that a finite dataset can distinguish every \emph{parameter}. Two nearby
parameter values may still generate almost indistinguishable trajectories. In
Section~\ref{sec:exp8} we examine this using the curvature of the fitted objective,
that is, how quickly the score worsens as a parameter moves away from its optimum.
A flat objective means weak parameter information, not a failure of propriety.
\end{remark}

Table~\ref{tab:criteria} condenses the distinction that drives the analysis.
The first two rules treat dispersion asymmetrically or omit it, while MRMin
makes it depend on ensemble size. Only the energy score balances fidelity to
the realised path against a reward for justified ensemble spread.

\begin{table*}[t]
\centering
\caption{Population objectives and identifying properties of the four calibration
criteria. The table is an analytic summary of Section~\ref{sec:theory}; $N$ is the
number of simulated trajectories and is a tuning parameter only for MRMin.}
\label{tab:criteria}
\small
\begin{tabular}{@{}p{0.13\linewidth}p{0.34\linewidth}p{0.15\linewidth}p{0.11\linewidth}p{0.10\linewidth}@{}}
\toprule
Criterion & Population objective & Dispersion term & Strictly proper & Tuning parameter \\
\midrule
MRMean-I & $\|\mu_\beta-\mu_0\|^2+\tr\Sigma_\beta+\tr\Sigma_0$ & penalised & no & none \\
MRMean-II & $\|\mu_\beta-\mu_0\|^2+\tr\Sigma_0$ & absent & no & none \\
MRMin & nearest-neighbour objective; $N$-indexed, with Gaussian-toy limit
$\sigma_0\sqrt{1+2/K}$ & implicit through $N$ & no & $N$ \\
Energy score & $\E\|X-y\|-\tfrac12\E\|X-X'\|$ & rewarded & yes & none \\
\bottomrule
\end{tabular}
\end{table*}

\begin{assumption}
\label{assump:standing}
The following conditions are in force throughout Section~\ref{sec:theory-improper}
unless stated otherwise:
\begin{enumerate}[label=(\alph*)]
  \item $P_0$ and $P_\beta$ have finite second moments, i.e.\ $\tr\Sigma_0 < \infty$
        and $\tr\Sigma_\beta < \infty$ for all $\beta$ in the parameter space.
  \item $X \perp y$: simulated trajectories are drawn independently of the observed
        trajectory.
  \item \emph{(Separability, used only where explicitly invoked.)} The mean and
        dispersion structures of $P_\beta$ are separately parameterisable: there
        exists $\beta$ with $\mu_\beta = \mu_0$ while $\Sigma_\beta$ ranges, on that
        same mean-preserving slice, over an interval (or region) containing values
        arbitrarily close to $0$ and values bounded away from it.
\end{enumerate}
Condition (c) is invoked only in Theorem~\ref{thm:mrmean1} and is flagged there
explicitly. Section~\ref{sec:mrmean2} discusses, as a mandatory caveat rather than
a standing assumption, what happens when (c) fails because a dispersion parameter
also shifts the mean.
\end{assumption}

\subsection{Impropriety of current calibration methods}
\label{sec:theory-improper}

\subsubsection{MRMean-I: variance collapse}
\label{sec:mrmean1}

MRMean-I applies the error first and the averaging second. For a finite ensemble
$X_1,\ldots,X_N$, it uses
\[
  \widehat S_{1,N}(y)=\frac{1}{N}\sum_{i=1}^N\|X_i-y\|^2,
\]
whose population version is the expected squared distance from a single simulated
draw to the observation:
\[
  S_1(P_\beta, y) = \E_X \|X - y\|^2 .
\]

\begin{theorem}[Variance collapse]
\label{thm:mrmean1}
Under Assumption~\ref{assump:standing}(a)--(b),
\begin{equation}
  \E_{y}\, S_1(P_\beta, y)
  = \|\mu_\beta - \mu_0\|^2 + \tr\Sigma_\beta + \tr\Sigma_0 .
  \label{eq:s1-decomp}
\end{equation}
Consequently, under the additional separability condition,
Assumption~\ref{assump:standing}(c), the infimum is attained at zero dispersion
when that boundary is admissible; otherwise every minimising sequence satisfies
$\tr\Sigma_\beta\to0$ while holding $\mu_\beta=\mu_0$. Thus the stochastic
parameters are driven to the lower boundary of their admissible range and the
calibrated model approaches its deterministic core. $S_1$ is not proper.
\end{theorem}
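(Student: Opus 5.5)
The plan is to obtain \eqref{eq:s1-decomp} by a bias--variance expansion and then read off the minimisation and impropriety claims directly.

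\emph{Decomposition.} I would write $X-y=(X-\mu_\beta)+(\mu_\beta-\mu_0)+(\mu_0-y)$, expand $\|X-y\|^2$ into three squared norms and three cross terms, and take expectations over $X\sim P_\beta$ and $y\sim P_0$. Assumption~\ref{assump:standing}(a) makes every term integrable, so Fubini lets me compute $\E_y S_1(P_\beta,y)$ as the joint expectation $\E_{X,y}\|X-y\|^2$. The squared terms give $\E\|X-\mu_\beta\|^2=\tr\Sigma_\beta$, the constant $\|\mu_\beta-\mu_0\|^2$, and $\E\|y-\mu_0\|^2=\tr\Sigma_0$. The two cross terms involving the deterministic vector $\mu_\beta-\mu_0$ vanish because $X-\mu_\beta$ and $y-\mu_0$ have mean zero. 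The remaining cross term $\E\langle X-\mu_\beta,\mu_0-y\rangle$ factorises by the independence in Assumption~\ref{assump:standing}(b), so it also vanishes. This is the only place (b) is used, and it is the one step I would state carefully.

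\emph{Minimisation.} In \eqref{eq:s1-decomp}, $\tr\Sigma_0$ does not depend on $\beta$, and the other two terms are nonnegative. Hence the objective is bounded below by $\tr\Sigma_0$, with equality exactly when $\mu_\beta=\mu_0$ and $\tr\Sigma_\beta=0$. Under (c), the mean-preserving slice contains $\beta$ with $\tr\Sigma_\beta$ arbitrarily close to $0$, so the infimum over the whole parameter space equals $\tr\Sigma_0$.

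If the zero-dispersion boundary is admissible, the infimum is attained there. Otherwise, take any minimising sequence $\beta_m$. Since both nonnegative terms must tend to zero, $\|\mu_{\beta_m}-\mu_0\|\to0$ and $\tr\Sigma_{\beta_m}\to0$; within the slice this is exactly the statement that $\tr\Sigma_\beta\to0$ while $\mu_\beta=\mu_0$. Because $\Sigma\succeq0$, $\tr\Sigma\to0$ forces $\Sigma\to0$, so the calibrated law converges to a point mass at the mean, i.e.\ the deterministic core.

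\emph{Impropriety.} I would exhibit, for a nondegenerate $P_0$ with $\tr\Sigma_0>0$, an alternative $P=\delta_{\mu_0}$ (or a slice member with smaller dispersion). Its expected score is $\tr\Sigma_0<2\tr\Sigma_0=\E_y S_1(P_0,y)$, so the true distribution is not the minimiser and $S_1$ is not proper.

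The main subtlety is class membership. If the point mass is not in $\mathcal P$, I would use the slice member $P_{\beta}$ whose trace is strictly less than $\tr\Sigma_0$; (c) guarantees that such a member exists. The inequality still holds strictly provided the $\beta$ realising $P_0$ has $\tr\Sigma_0$ bounded away from $0$, which the ``bounded away'' clause of (c) supplies. I expect no step to be a genuine obstacle beyond stating these integrability and admissibility conditions precisely.
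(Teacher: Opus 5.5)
Your proposal is correct and follows the paper's proof: both use the same three-term centring decomposition of $X-y$, kill the cross terms by centring and the independence in (b), and then read off collapse on the mean-preserving slice and impropriety from $\tr\Sigma_0>0$. Your version is slightly more explicit than the paper's in two places: the global lower bound $\tr\Sigma_0$ forces both nonnegative terms to vanish along any minimising sequence, and the concrete competitor has expected score $\tr\Sigma_0+\tr\Sigma_\beta<2\tr\Sigma_0$. One small correction: the strict inequality needs only $\tr\Sigma_0>0$ for the chosen truth, not a ``bounded away'' condition, since (c) already supplies slice members with arbitrarily small trace.
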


In plain language, MRMean-I counts every difference between a simulated run and the
observation as error. It therefore improves its average score by removing simulated
run-to-run spread, even when that spread is real.

\noindent\textit{Proof outline.} The decomposition follows by centring the simulated and observed trajectories: all cross terms have zero expectation, leaving mean mismatch plus the two covariance traces. On a mean-preserving parameter slice, reducing predictive dispersion strictly reduces MRMean-I. The full proof is in the Electronic Companion.

\begin{remark}
No distributional assumption beyond finite second moments and independence
(Assumption~\ref{assump:standing}(a)--(b)) was used in the squared-loss identity.
The result therefore applies to any stochastic transportation simulator whose
vector-valued output satisfies those conditions and whose mean and dispersion can
be varied as required by Assumption~\ref{assump:standing}(c). QIDM and 2D-IDM are
validation vehicles for that score-level result, not boundaries on its scope.
\end{remark}

\begin{remark}[Run-wise RMSE and other GoF]
\label{rem:gof-generalises}
Two superficially similar RMSE objectives must be distinguished.
$\sqrt{\E\|X-y\|^2}$ is a strictly increasing transform of $S_1$ and therefore
has the same minimiser.  The form used in practice is instead
\[
 S_1^{\rm RMSE}(P,y)=\E\|X-y\|,
\]
the mean of run-wise RMSEs, as in Equation~(4) and the MRMean entries of
Table~1 in \citet{zhou2025calibration}.  Jensen's inequality compares these two
objectives but does not make their argmins identical.

The practical form is exactly the first term of the energy score; it lacks only
$-\tfrac12\E\|X-X'\|$.  It is likewise improper.  In the correctly centred
Gaussian toy,
\[
 \E\|X-Y\|=c_K\sqrt{\sigma^2+\sigma_0^2},
 \qquad c_K=\E\chi_K,
\]
which is strictly increasing in $\sigma$ and is minimised at the lower scale
boundary. Within the same correctly centred Gaussian scale family, replacing the
norm by a power $\|\cdot\|^p$, $p>0$, leaves its expectation proportional to
$(\sigma^2+\sigma_0^2)^{p/2}$ and again collapses at the lower boundary.
Thus collapse is not an artefact of squaring in this scale family. This calculation
is not asserted as a theorem for every convex loss and every predictive family:
outside the squared-loss identity of Theorem~\ref{thm:mrmean1}, the effect depends on
the loss, distribution family, and coupling of mean and dispersion.
Figure~\ref{fig:exp1-scores}(e) shows the Gaussian objectives side by side.
\end{remark}

\begin{remark}[Relation to prior work]
\citet{zhou2025calibration} established a variance-collapse result for MRMean
empirically and theoretically for specific stochastic CF models, via a
run-averaged argument tied to those models' functional forms. For squared Euclidean
loss, Theorem~\ref{thm:mrmean1} separates the result from those functional forms and
derives it for any predictive distribution with finite second moments, subject to
the stated independence and separability conditions. Remark~\ref{rem:gof-generalises}
shows the same direction for norm powers in the centred Gaussian scale family but
does not claim an arbitrary-loss theorem. The score-level squared-loss result is a
strengthening of that earlier analysis, not a correction of it.
\end{remark}


\subsubsection{MRMean-II: non-identification}
\label{sec:mrmean2}

MRMean-II reverses the operations: it averages the simulated trajectories first and
then computes one error,
\[
  \widehat S_{2,N}(y)=\left\|\frac{1}{N}\sum_{i=1}^N X_i-y\right\|^2.
\]
Its population version is the squared distance from the predictive mean to the
observation,
\[
  S_2(P_\beta,y)=\|\E_X X-y\|^2=\|\mu_\beta-y\|^2.
\]

\begin{theorem}[Non-identification]
\label{thm:mrmean2}
Under Assumption~\ref{assump:standing}(a)--(b),
\begin{equation}
\begin{aligned}
  \E_y S_2(P_\beta,y)
  &=\|\mu_\beta-\mu_0\|^2+\tr\Sigma_0,\\
  \frac{\partial}{\partial\Sigma_\beta}\E_yS_2(P_\beta,y)&\equiv0.
\end{aligned}
  \label{eq:s2-decomp}
\end{equation}
Thus a parameter that moves only $\Sigma_\beta$ is not identified by $S_2$.
\end{theorem}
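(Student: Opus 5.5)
The plan is to compute the expected score by a bias--variance decomposition around $\mu_0$, mirroring the centring argument in the outline of Theorem~\ref{thm:mrmean1}. The difference is that only the observation is random here, because $S_2$ depends on $P_\beta$ only through the deterministic vector $\mu_\beta$. Assumption~\ref{assump:standing}(a) guarantees that $\mu_\beta$, $\mu_0$ and $\Sigma_0$ exist and are finite, so every expectation below is well defined. Assumption~\ref{assump:standing}(b) is needed only to interpret $S_2$ as the limit of $\widehat S_{2,N}$; it plays no role in the population identity itself.

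First write $\mu_\beta - y = (\mu_\beta-\mu_0) - (y-\mu_0)$ and expand the squared norm:
\[
\|\mu_\beta-y\|^2=\|\mu_\beta-\mu_0\|^2-2(\mu_\beta-\mu_0)^\top(y-\mu_0)+\|y-\mu_0\|^2 .
\]
Then take $\E_{y\sim P_0}$ term by term.
\begin{itemize}
\item The first term is constant.
\item The cross term vanishes because $\E(y-\mu_0)=0$ and $\mu_\beta-\mu_0$ is nonrandom.
\item The last term equals $\E\,\tr\bigl((y-\mu_0)(y-\mu_0)^\top\bigr)=\tr\Sigma_0$, by linearity of trace and expectation.
\end{itemize}
This gives the first line of \eqref{eq:s2-decomp}.

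For the second line, note that the right-hand side depends on $\beta$ only through $\mu_\beta$. To make the derivative statement precise, I would read it as follows: along any parameter direction (or path) that holds $\mu_\beta$ fixed while varying $\Sigma_\beta$, the population objective is constant. Equivalently, viewing the objective as a function of the pair $(\mu,\Sigma)$, its partial derivative in $\Sigma$ is identically zero. Non-identification then follows directly. If $\beta_1,\beta_2$ satisfy $\mu_{\beta_1}=\mu_{\beta_2}$ but $\Sigma_{\beta_1}\neq\Sigma_{\beta_2}$, they have equal expected score. So the argmin set contains entire mean-preserving slices, and it cannot single out $P_0$ whenever such a slice contains more than one distribution. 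In particular, $S_2$ is not strictly proper. It is in fact proper but not strict on the class of distributions with a given mean, since $\|\mu_\beta-\mu_0\|^2\ge0$.

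I expect the main obstacle to be interpretive rather than computational: giving the symbolic derivative $\partial/\partial\Sigma_\beta$ a rigorous meaning when $\Sigma_\beta$ is not itself a free coordinate of $\beta$. I would handle this as described above, by stating the result as invariance along mean-preserving directions. I would also note the caveat the paper flags: if a dispersion parameter also moves $\mu_\beta$, it becomes identified only through its mean effect, never through spread.
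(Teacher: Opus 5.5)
Your proof is correct and matches the paper's argument: the paper also writes $\mu_\beta-y=(\mu_\beta-\mu_0)-(y-\mu_0)$, removes the cross term by centring, uses $\E\|y-\mu_0\|^2=\tr\Sigma_0$, and reads the derivative claim simply as the right-hand side containing no $\Sigma_\beta$. Your reading of $\partial/\partial\Sigma_\beta$ as invariance along mean-preserving directions makes precise what the paper leaves informal, and your side remark that $S_2$ is proper but not strictly proper is right; in fact it is proper on the whole finite-second-moment class, not only within a fixed-mean class.
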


In plain language, MRMean-II sees the mean trajectory but is blind to changes that
only widen or narrow the simulated trajectory distribution.

\noindent\textit{Proof outline.} After replacing the ensemble by its mean, the expected squared error contains only mean mismatch and the observation variance. The predictive covariance therefore disappears from the population objective. The full proof is in the Electronic Companion.

The contrast with Theorem~\ref{thm:mrmean1} is qualitative.  MRMean-I creates a
directed collapse and hence a repeatable boundary estimate.  MRMean-II contains
no population information about a pure dispersion coordinate, so finite-ensemble
noise and parameter bounds determine its estimate.  Section~\ref{sec:exp1}
provides numerical shadows of both statements.

\begin{remark}[Partial identification through a mean channel]
\label{rem:partial-identification}
If a nominal dispersion parameter also changes $\mu_\beta$, it enters
\eqref{eq:s2-decomp} through the location term.  MRMean-II may then identify the
parameter as a location effect, but it still receives no information from the
forecast-dispersion term $\Sigma_\beta$.
\end{remark}

\paragraph{Example 1 (explicit 2D-IDM bias).}
\label{ex:idm2d-bias}
In 2D-IDM,
\[
  T_n\sim U(T_{\min},T_{\min}+\Delta T),
  \qquad \E T_n=T_{\min}+\frac{\Delta T}{2}.
\]
Let the data-generating parameters be $(T_{\min,0},\Delta T_0)$, but hold the
fitted lower endpoint at $T_{\min,0}+b$.  Population MRMean-II matches the mean
channel, so
\[
  T_{\min,0}+b+\frac{\widehat{\Delta T}}{2}
  =T_{\min,0}+\frac{\Delta T_0}{2},
  \qquad
  \widehat{\Delta T}=\Delta T_0-2b,
\]
subject to the parameter bounds.  The parameter is identified, but as a
\emph{location} rather than a dispersion parameter: its bias is $-2b$ and is
unrelated to the true dispersion.  In particular, if $T_{\min}$ is held at its
true value ($b=0$), then $\widehat{\Delta T}=\Delta T_0$; agreement at the truth
cannot discriminate among criteria.  Section~\ref{sec:exp2} uses this prediction
as a design diagnostic.

\subsubsection{MRMin: dependence on the number of runs}
\label{sec:mrmin}

MRMin retains only the closest of $N$ simulated trajectories,
\[
  S_3^{(N)}(P_\beta,y)=\min_{i\leq N}\|X_i-y\|^2.
\]
It is therefore a nearest-neighbour criterion whose population minimiser can move
with the computational choice $N$.

For the isotropic Gaussian toy, let
$X_i\stackrel{\mathrm{iid}}{\sim}N_K(\mu_0,\sigma^2I)$ and
$Y\sim N_K(\mu_0,\sigma_0^2I)$ independently, and define
\[
\begin{aligned}
  g_N(\sigma)&=\E\min_{i\leq N}\|X_i-Y\|^2,\\
  \underline\sigma_N
    &=\inf\arg\min_{\sigma\geq0}g_N(\sigma).
\end{aligned}
\]
Conditioning on $Y=y$ gives the exact representation
\begin{equation}
  g_N(\sigma)=\E_Y\int_0^\infty
  2m\{1-p_{m,\sigma}(Y)\}^{N}\,dm,
  \label{eq:gN}
\end{equation}
where $p_{m,\sigma}(y)=\Pr_\sigma(\|X-y\|\leq m)$.

\begin{theorem}[Finite-$N$ comparative statics]
\label{thm:mrmin}
Put $h_N(u)=g_N(e^u)$.  If, on the scale interval under consideration,
\[
  h_{N+1}(u)-h_N(u)
\]
is strictly decreasing in $u$, then $\underline\sigma_{N+1}\geq
\underline\sigma_N$.  Thus the lower MRMin minimiser is non-decreasing wherever
the Gaussian objective has decreasing differences.  Because $g_N$ itself
depends on $N$, MRMin targets an $N$-indexed estimand rather than using $N$ only
to control Monte Carlo precision.
\end{theorem}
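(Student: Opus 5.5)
This is a monotone comparative statics argument of Topkis type, run in the log-scale coordinate $u=\log\sigma$. Write $A_N=\arg\min_u h_N(u)$ on the interval under consideration, so that $\underline\sigma_N=\exp(\inf A_N)$; the boundary case $\sigma=0$ is handled at the end. Set $D(u)=h_{N+1}(u)-h_N(u)$, which is strictly decreasing by hypothesis.

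\textbf{Step 1: minimisers exist and the lower one is attained.} We use the representation \eqref{eq:gN}. For fixed $y$ and $m$, the probability $p_{m,\sigma}(y)$ is continuous in $\sigma>0$. Moreover, $g_N(\sigma)\le g_1(\sigma)=K(\sigma^2+\sigma_0^2)$ by Theorem~\ref{thm:mrmean1}, so dominated convergence gives continuity of $g_N$. It also makes $g_N$ coercive as $\sigma\to\infty$: for fixed $m$, $p_{m,\sigma}(y)\to0$, so the integral diverges. Hence the argmin set is nonempty and closed, and $\inf A_N\in A_N$ whenever the infimum is finite. If the infimum is pushed to the lower boundary $\sigma=0$, we extend $g_N$ continuously to $\sigma=0$, where the ensemble is deterministic. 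The same argument then goes through on $[0,\infty)$ after replacing the $u$-parametrisation by the monotone map $\sigma$. Only order matters, so this change of coordinates is harmless.

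\textbf{Step 2: the exchange argument.} Let $u_N=\inf A_N$ and suppose, for contradiction, that some $u'\in A_{N+1}$ satisfies $u'<u_N$. Minimality of $u'$ for $h_{N+1}$ gives
\[
h_N(u')+D(u')\le h_N(u_N)+D(u_N).
\]
Strict decrease of $D$ with $u'<u_N$ gives $D(u')>D(u_N)$. Combining the two yields $h_N(u')<h_N(u_N)$, which contradicts $u_N\in A_N$. Hence every element of $A_{N+1}$ is at least $u_N$, so $\inf A_{N+1}\ge u_N$. Exponentiating gives $\underline\sigma_{N+1}\ge\underline\sigma_N$. This is where strictness is used: it rules out ties that could let the $(N+1)$-minimiser fall below $u_N$.

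\textbf{Step 3: the $N$-indexed estimand.} The final sentence follows because $g_{N+1}-g_N$ is not identically constant in $\sigma$. Indeed, the hypothesis of strictly decreasing differences already makes $D$ nonconstant, and in general
\[
g_N-g_{N+1}=\E_Y\int_0^\infty 2m\,(1-p)^N p\,dm>0
\]
with $\sigma$-dependent magnitude. So the population objectives, and not merely their Monte Carlo estimates, differ across $N$. The paper's Gaussian-toy limit $\sigma_0\sqrt{1+2/K}$, which differs from $\sigma_0$, illustrates that the selected scale genuinely moves with $N$.

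I expect the main obstacle to be Step 1 rather than the comparative-statics step. The issues are attainment of the infimum, possible boundary minimisers at $\sigma=0$ or at the edges of the interval under consideration, and the justification for differentiating or passing limits under \eqref{eq:gN}. The plan is to rely only on continuity and coercivity, which follow from dominated convergence with the bound $g_N\le g_1$. This keeps the argument free of any convexity claim about $h_N$, which I do not expect to hold in general.
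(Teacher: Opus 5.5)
Your proposal is correct and is essentially the paper's proof: the same Topkis-type exchange argument in $u=\log\sigma$, contradicting minimality through strict decrease of $h_{N+1}-h_N$, with your Step~1 (continuity and coercivity of $g_N$, so that $\inf\arg\min h_N$ is attained) making explicit an attainment assumption that the paper's proof uses tacitly. One small correction to your commentary: strictness is not actually needed for the \emph{lower} minimisers, because a tie would put $u'$ in $A_N$ below $\inf A_N$; strictness instead yields the stronger ordering $\inf A_{N+1}\ge\sup A_N$.
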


In plain language, adding simulations can systematically change the scale preferred
by MRMin. Here $N$ changes the calibration target, not merely the numerical accuracy
with which a fixed target is computed.

\noindent\textit{Proof outline.} The result is a decreasing-differences comparative-static argument: if the increment from $N$ to $N+1$ falls with log scale, the lower minimiser cannot move left. The full proof is in the Electronic Companion.

The decreasing-differences premise is a property of the Gaussian objective,
not an assumption on the CF model.  We verify it numerically rather than
establishing it analytically for every $(K,\sigma)$.  The theorem should
therefore be read conditionally: wherever the Gaussian objective has decreasing
differences in log scale, the MRMin minimiser moves outward with $N$.
Section~\ref{sec:exp1} supports the premise throughout the tested grid for
$K\geq10$, while also locating scalar regions where this sufficient condition
fails; monotonicity of the computed scalar minimisers does not turn that
sufficient condition into a global theorem.
For a new model, a practical finite-range diagnostic is to evaluate
$h_{N+1}(u)-h_N(u)$ on a rectangular grid in log scale around the fitted region
and check whether it decreases in $u$ at every tested $N$. Passing this check
supports the comparative-static direction only on that grid; failing it requires
reporting the empirical $N$ sensitivity without a monotonicity claim. An analytic
characterisation for general nonlinear transportation simulators remains future work.

\begin{remark}[Why the random seed is not a parameter]
\label{rem:seed-not-parameter}
The nested-optimisation interpretation in \citet{zhou2025calibration} treated the
random seed as an implicit variable and selected the seed whose trajectory best
matched the observation. That is a valid description of the computation but not of
the statistical estimand. Write $X_i=G(\beta,R_i)$ for simulator $G$ and independent
seeds $R_i$. At a fixed $\beta$ and observation $y$, if
$F_{\beta,y}(t)=\Pr\{\|X-y\|^2\leq t\}$, then
\[
 \Pr\!\left\{\min_{i\leq N}\|X_i-y\|^2>t\mid y\right\}
 =\{1-F_{\beta,y}(t)\}^{N}.
\]
The selected seed therefore supplies an order statistic whose distribution changes
with $N$; it is not a persistent feature of $P_\beta$ to be learned from the data.
Searching more seeds makes a rare chance match increasingly available and can favour
a different $P_\beta$. Calibration should integrate over simulator randomness when
scoring $P_\beta$, whereas optimising over the seed changes the target into a
best-of-$N$ nearest-neighbour criterion. The earlier fixed-$N$ recovery result remains
valid for its design, but it does not provide a population justification for MRMin.
\end{remark}

The pointwise limits in \eqref{eq:gN} require care.  At $\sigma=0$,
$g_N(0)=K\sigma_0^2$ for every $N$, and for each fixed $N$ the same value is
obtained as $\sigma\downarrow0$.  Conversely, for every fixed $\sigma>0$,
$p_{m,\sigma}(Y)>0$ almost surely for each $m>0$, so
$\{1-p_{m,\sigma}(Y)\}^N\to0$ and $g_N(\sigma)\to0$.  These are pointwise,
non-uniform limits: the second statement precludes convergence to the
small-scale floor uniformly in $N$.  It therefore does not imply that the
minimiser diverges.

\paragraph{High-dimensional finite-budget approximation.}
When $\log N=o(K)$, squared Gaussian distances have relative fluctuations of
order $K^{-1/2}$, and the minimum of $N$ approximately Gaussian distances probes
their left tail at a standardised displacement of order $\sqrt{\log N}$. This
suggests a square-root-logarithmic finite-budget approximation for the movement
of $\underline\sigma_N$; it does not by itself prove the rate of the minimiser.
Section~\ref{sec:exp1} treats this approximation as an empirical scaling law and
compares it with alternatives over the reported $(K,N)$ grid.

\begin{proposition}[Fixed-dimensional nearest-neighbour limit]
\label{prop:mrmin-dimension}
Fix $K$. Let $f_\sigma$ be the forecast density and $V_K$ the volume of the
$K$-dimensional unit ball. Assume that, for $P_0$-almost every $y$,
$f_\sigma$ is positive and continuous in a neighbourhood of $y$, so that
$P_\sigma\{B(y,r)\}=f_\sigma(y)V_Kr^K+o(r^K)$ as $r\downarrow0$, and assume
uniform integrability sufficient to average the conditional second-moment limit
over $Y$. Then, for fixed $K$ as $N\to\infty$,
\begin{equation}
  g_N(\sigma)\sim
  \frac{\Gamma(1+2/K)}{(NV_K)^{2/K}}
  \E\{f_\sigma(Y)^{-2/K}\}.
  \label{eq:mrmin-nn-asymptotic}
\end{equation}
For the Gaussian toy the expectation is finite when
$K\sigma^2>2\sigma_0^2$, and the scale factor in
\eqref{eq:mrmin-nn-asymptotic} is minimised at
\begin{equation}
  \sigma_\infty=\sigma_0\sqrt{1+2/K}.
  \label{eq:mrmin-finite-limit}
\end{equation}
Thus the ultimate fixed-dimensional nearest-neighbour target saturates at a
finite scale. The $\sqrt{\log N}$ description above concerns a separate,
high-dimensional finite-budget approximation.
\end{proposition}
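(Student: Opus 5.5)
The plan has two parts. First, derive the conditional small-ball asymptotics of the nearest-neighbour distance for fixed $y$ and average them over $Y$. Second, evaluate $\E\{f_\sigma(Y)^{-2/K}\}$ in closed form for the Gaussian toy and minimise it in $\sigma$.

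\textbf{Step 1 (conditional limit).} Conditioning on $Y=y$ and substituting $t=m^2$ in \eqref{eq:gN} gives
\[
\E\bigl[\min_{i\le N}\|X_i-y\|^2\mid y\bigr]=\int_0^\infty\{1-F(t)\}^N\,dt,
\qquad F(t)=P_\sigma\{B(y,\sqrt t)\}.
\]
By hypothesis, $F(t)=c\,t^{K/2}(1+o(1))$ as $t\downarrow0$, with $c=f_\sigma(y)V_K$. Rescale $t=s\,(Nc)^{-2/K}$. For each fixed $s$,
\[
\{1-F(t)\}^N\to\exp(-s^{K/2}).
\]
The contribution from $t$ bounded away from $0$ is $O(q^N)$ for some $q<1$, because $F(t)\ge F(\delta)>0$ there by positivity near $y$. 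Dominated convergence then gives
\[
(Nc)^{2/K}\int_0^\infty\{1-F\}^N\,dt\to\int_0^\infty e^{-s^{K/2}}\,ds=\Gamma(1+2/K).
\]
The domination uses $\{1-F\}^N\le e^{-NF}$ together with a local lower bound $F(t)\ge (c/2)\,t^{K/2}$ for small $t$. This yields $N^{2/K}\E[\cdot\mid y]\to\Gamma(1+2/K)(V_Kf_\sigma(y))^{-2/K}$ for $P_0$-a.e.\ $y$.

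\textbf{Step 2 (averaging over $Y$).} The assumed uniform integrability upgrades the pointwise limit to convergence of expectations, which gives \eqref{eq:mrmin-nn-asymptotic}. I expect the main obstacle to be justifying this interchange, especially the tail region where $f_\sigma(y)$ is small. The statement assumes uniform integrability, so I would simply invoke it. For the Gaussian toy one could verify it directly by bounding $N^{2/K}\E[\min\mid y]$ by a multiple of $f_\sigma(y)^{-2/K}$ plus a polynomial in $\|y\|$.

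\textbf{Step 3 (Gaussian computation).} Here $f_\sigma(y)=(2\pi\sigma^2)^{-K/2}\exp\{-\|y-\mu_0\|^2/(2\sigma^2)\}$, so
\[
f_\sigma(Y)^{-2/K}=2\pi\sigma^2\exp\{\|Y-\mu_0\|^2/(K\sigma^2)\}.
\]
Since $\|Y-\mu_0\|^2=\sigma_0^2\chi^2_K$, the $\chi^2$ moment generating function gives
\[
\E\{f_\sigma(Y)^{-2/K}\}=2\pi\sigma^2\bigl(1-2\sigma_0^2/(K\sigma^2)\bigr)^{-K/2}.
\]
This is finite exactly when $K\sigma^2>2\sigma_0^2$. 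Writing $v=\sigma^2$, minimise
\[
\phi(v)=\log v-\tfrac K2\log\bigl(1-a/v\bigr),\qquad a=2\sigma_0^2/K.
\]
Its derivative is
\[
\phi'(v)=\frac1v-\frac{K a}{2v(v-a)}.
\]
This vanishes iff $v-a=Ka/2=\sigma_0^2$, i.e.\ $v=\sigma_0^2(1+2/K)$. The sign of $\phi'$ changes from negative to positive there, and $\phi\to\infty$ both as $v\downarrow a$ and as $v\to\infty$. Hence this is the unique minimiser, $\sigma_\infty=\sigma_0\sqrt{1+2/K}$. The prefactor $\Gamma(1+2/K)(NV_K)^{-2/K}$ does not depend on $\sigma$, so it does not affect the minimiser.
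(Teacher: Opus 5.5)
Your proposal is correct and follows essentially the same route as the paper: a conditional nearest-neighbour limit at fixed $y$, uniform integrability to average over $Y$, the Gaussian identity $\E\{f_\sigma(Y)^{-2/K}\}=2\pi\sigma^2(1-2\sigma_0^2/(K\sigma^2))^{-K/2}$, and minimisation of its logarithm in $\sigma^2$. The only difference is that you prove the conditional second-moment limit directly by dominated convergence on the tail integral, whereas the paper passes through $Nf_\sigma(y)V_KR_N(y)^K\Rightarrow\operatorname{Exp}(1)$ and then moment convergence.

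One small fix is needed for the region $t\ge\delta$. Write $\{1-F(t)\}^N\le q^{N-1}\{1-F(t)\}$ and use $\int_\delta^\infty\{1-F(t)\}\,dt\le\E\|X-y\|^2<\infty$, because $q^N$ alone is not integrable over $[\delta,\infty)$.
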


In plain language, with a fixed trajectory dimension and extremely large $N$, the
best-of-$N$ scale approaches a finite value above the truth. The growth seen over
practical finite budgets is an earlier, high-dimensional regime, so the two results
describe different limits rather than contradicting one another.

\noindent\textit{Proof outline.} Conditional nearest-neighbour radii converge after the usual local-density rescaling. Averaging that limit and minimising its Gaussian scale factor gives $\sigma_0\sqrt{1+2/K}$. The full derivation is in the Electronic Companion.

For trajectories, $K$ is typically of order $10^3$ while $N$ is limited to
$10^2$--$10^3$, so $\log N\ll K$ and the Gaussian-left-tail regime is the
relevant one.  The distinction is material: at $K=1200$ and $N=2000$, the bare
nearest-neighbour radius factor $N^{1/K}-1$ is only $0.6\%$, whereas the observed
finite-budget movement exceeds $130\%$; Section~\ref{sec:exp1} reports the
goodness-of-fit comparison that selects the $\sqrt{\log N}$ form.

\begin{remark}[The small-scale floor is not uniform in $N$]
The equality $g_N(0)=K\sigma_0^2$ is independent of $N$, but no neighbourhood
of zero supplies an $N$-uniform floor: for each $\sigma>0$, increasing $N$
eventually drives $g_N(\sigma)$ to zero.  Consequently increasing $N$ moves the
finite-budget MRMin target but does not force it to infinity.  The energy score
in Section~\ref{sec:theory-es} instead has an $N$-invariant population minimiser at $\sigma_0$.
\end{remark}

\subsection{A strictly proper alternative: the energy score}
\label{sec:theory-es}

Define, for a predictive distribution $P$ with finite first moment and an
observation $y$,
\[
  \mathrm{ES}(P,y) = \E\|X-y\| - \tfrac12\, \E\|X-X'\|,
  \qquad X, X' \stackrel{\mathrm{iid}}{\sim} P .
\]
The energy score is strictly proper relative to the class of distributions on
$\R^K$ with finite first moment \citep{gneiting2007strictly}; we use this fact
and do not reprove it here.

\paragraph{Interpretation.} The second term, $-\tfrac12\E\|X-X'\|$, is a
\emph{dispersion reward}: it grows in magnitude as $P$ spreads out, and enters
the score with a negative sign, so that spreading the predictive distribution
lowers $\mathrm{ES}$ up to the point where the gain is no longer worth the
resulting mismatch with $y$ in the first term. Compare this directly with
Theorem~\ref{thm:mrmean1}: $S_1$ penalises $\tr\Sigma_\beta$ with nothing in the
score to offset it, which is exactly why its minimiser drives $\Sigma_\beta$ to
the boundary. The energy score contains exactly the compensating term that $S_1$
lacks, and it is this term --- not a different choice of MoP, and not a
different choice of GoF exponent (Remark~\ref{rem:gof-generalises}) --- that
repairs the method. The following results formalise this distinction.

\begin{proposition}[Unbiasedness in $N$]
\label{prop:es-unbiased}
Given $X_1,\dots,X_N \stackrel{\mathrm{iid}}{\sim} P$, $N \ge 2$, define the
$U$-statistic estimator
\begin{equation}
  \widehat{\mathrm{ES}}_N
  = \frac{1}{N}\sum_{i=1}^N \|X_i - y\|
  - \frac{1}{2N(N-1)} \sum_{i \ne j} \|X_i - X_j\| .
  \label{eq:es-hat}
\end{equation}
Then $\E\, \widehat{\mathrm{ES}}_N = \mathrm{ES}(P,y)$ for every $N \ge 2$.
\end{proposition}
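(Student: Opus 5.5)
The plan is to use linearity of expectation applied separately to the two sums in \eqref{eq:es-hat}, with $y$ held fixed. Expectation is taken over the simulator randomness only, i.e.\ over $X_1,\dots,X_N \stackrel{\mathrm{iid}}{\sim} P$. First I will confirm that every term is integrable, which is why the energy score is defined on distributions with finite first moment. By the triangle inequality, $\|X_i-y\|\le\|X_i\|+\|y\|$ and $\|X_i-X_j\|\le\|X_i\|+\|X_j\|$. Both right-hand sides have finite expectation because $\E_{X\sim P}\|X\|<\infty$. Hence $\mathrm{ES}(P,y)$ is finite, and we may split the expectation of the sum into a sum of expectations without any convergence issue.

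For the first term, each $X_i$ has law $P$, so $\E\|X_i-y\|=\E_{X\sim P}\|X-y\|$ for every $i$. Averaging $N$ identical values gives exactly $\E\|X-y\|$.

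For the second term, the sum runs over the $N(N-1)$ ordered pairs $(i,j)$ with $i\neq j$. For each such pair, independence and identical distribution imply that $(X_i,X_j)$ has the product law $P\otimes P$, which is the joint law of $(X,X')$. Hence $\E\|X_i-X_j\|=\E\|X-X'\|$. Summing over the pairs and dividing by $2N(N-1)$ gives $\tfrac12\E\|X-X'\|$. Subtracting this from the first term yields $\E\,\widehat{\mathrm{ES}}_N=\mathrm{ES}(P,y)$ for every $N\ge2$. The condition $N\ge2$ is needed only so that the pair sum is nonempty and the normalisation $N(N-1)$ is nonzero.

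There is no real obstacle here. The only point needing care is to stress \emph{why} the diagonal pairs $i=j$ are excluded. Including them, as in the V-statistic with normalisation $N^2$, would add zero-valued terms and shrink the dispersion reward by the factor $(N-1)/N$. The resulting estimator would target an $N$-dependent quantity, in the same spirit as the MRMin defect in Theorem~\ref{thm:mrmin}. I would therefore close with a short remark on this point.

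Two further remarks complete the picture. First, if $y$ is itself drawn from $P_0$ independently of the ensemble (Assumption~\ref{assump:standing}(b)), then by the tower property the unconditional expectation equals $\E_{y\sim P_0}\mathrm{ES}(P,y)$. The population objective, and hence its minimiser, therefore does not depend on $N$. Second, combining this with the strict propriety of the energy score \citep{gneiting2007strictly} shows that $N$ affects only Monte Carlo variance, not the target.
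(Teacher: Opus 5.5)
Your proof is correct and follows the paper's argument: the first average is unbiased term by term, and the second sum has exactly $N(N-1)$ ordered i.i.d.\ pairs, each with expectation $\E\|X-X'\|$. Your triangle-inequality integrability check is a small addition that the paper leaves implicit in its finite-first-moment assumption.
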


In plain language, increasing $N$ makes the simulated energy score less noisy but
does not change its average target. This is the key contrast with MRMin.

\noindent\textit{Proof outline.} The first average is unbiased term by term. The second sum contains exactly $N(N-1)$ ordered independent pairs, so its $1/[2N(N-1)]$ normalisation has expectation $\tfrac12\E\|X-X'\|$. The full proof is in the Electronic Companion.

\begin{corollary}[Two distinct roles for $N$]
\label{cor:es-vs-mrmin}
For the energy score, Proposition~\ref{prop:es-unbiased} gives the same expected
objective for every integer $N\ge2$: ensemble size controls Monte Carlo precision
and does not redefine the estimand. In contrast, the MRMin limit is
\begin{equation}
\begin{aligned}
 \E\widehat{\mathrm{ES}}_N&=\mathrm{ES}(P,y)
 &&\text{for every }N\ge2,\\
 \lim_{N\to\infty}\underline\sigma_N
 &=\sigma_0\sqrt{1+2/K}>\sigma_0.
\end{aligned}
 \label{eq:es-vs-mrmin}
\end{equation}
MRMin is qualitatively different in two
respects.  Within achievable computational budgets, $\underline\sigma_N$ moves
with $N$ wherever the condition of Theorem~\ref{thm:mrmin} holds, as verified
over the trajectory-scale grids in Section~\ref{sec:exp1}.  In the ultimate
nearest-neighbour regime it does not converge to $\sigma_0$, but to
$\sigma_0\sqrt{1+2/K}$ in \eqref{eq:mrmin-finite-limit}, which exceeds
$\sigma_0$ for every finite $K$.  MRMin therefore targets an $N$-indexed
estimand at finite budget and a misspecified one in the limit; neither defect
is removed by increasing $N$.
\end{corollary}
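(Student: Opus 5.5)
The corollary has two displayed claims, and I will treat them separately. The first line, $\E\widehat{\mathrm{ES}}_N=\mathrm{ES}(P,y)$ for every $N\ge2$, is Proposition~\ref{prop:es-unbiased} applied verbatim. To make the ``same estimand'' statement explicit, I will take expectations over $Y\sim P_0$, using independence (Assumption~\ref{assump:standing}(b)) and Fubini with finite first moments. This gives $\E_{Y}\E\widehat{\mathrm{ES}}_N(P_\beta,Y)=\E_Y\mathrm{ES}(P_\beta,Y)$ for every $\beta$. The population objective is then literally the same function of $\beta$ for all $N$, so its argmin set is $N$-free. By strict propriety of ES (quoted from \citet{gneiting2007strictly}), in the Gaussian toy that argmin is $\sigma_0$.

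For the MRMin line, I will go through Proposition~\ref{prop:mrmin-dimension}. The plan is to write $g_N(\sigma)=a_N\,\phi(\sigma)\{1+\varepsilon_N(\sigma)\}$, with $a_N=\Gamma(1+2/K)(NV_K)^{-2/K}$ independent of $\sigma$, $\phi(\sigma)=\E f_\sigma(Y)^{-2/K}$, and $\varepsilon_N(\sigma)\to0$. For the isotropic Gaussian, $\phi$ is explicit: $f_\sigma(y)^{-2/K}=2\pi\sigma^2\exp\{\|y-\mu_0\|^2/(K\sigma^2)\}$. Taking expectations under $Y\sim N_K(\mu_0,\sigma_0^2I)$ gives $\phi(\sigma)=2\pi\sigma^2(1-2\sigma_0^2/(K\sigma^2))^{-K/2}$ on $K\sigma^2>2\sigma_0^2$, and $\phi=\infty$ otherwise. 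Differentiating $\log\phi$ in $s=\sigma^2$ gives $1/s-\sigma_0^2/(s^2(1-2\sigma_0^2/(Ks)))=0$, i.e.\ $s-2\sigma_0^2/K=\sigma_0^2$. This yields the unique minimiser $\sigma_\infty=\sigma_0\sqrt{1+2/K}$, and $\phi$ is strictly quasi-convex with $\phi\to\infty$ at both ends of its domain. Finally, $\sigma_\infty>\sigma_0$ is immediate for finite $K$.

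The main obstacle is passing from pointwise asymptotics of $g_N/a_N$ to convergence of the argmin $\underline\sigma_N$. Proposition~\ref{prop:mrmin-dimension} is pointwise in $\sigma$, and the remark on non-uniform floors warns that near $\sigma=0$ (and near $K\sigma^2=2\sigma_0^2$, where $\phi$ blows up) the convergence is not uniform. My approach will be to fix a compact interval $I=[\sigma_\infty-\delta,\sigma_\infty+\delta]$ inside the finite region and upgrade to uniform convergence $g_N/a_N\to\phi$ on $I$. For this I will use monotonicity and continuity in $\sigma$ of the conditional ball probabilities together with a Dini-type argument, or alternatively dominated convergence using the uniform-integrability hypothesis uniformly over $I$. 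That gives $\inf_I g_N/a_N\to\phi(\sigma_\infty)$ and $\min_{\partial I} g_N/a_N\to\min\{\phi(\sigma_\infty\pm\delta)\}>\phi(\sigma_\infty)$.

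Outside $I$ I must show that $g_N/a_N$ stays bounded below by something exceeding $\phi(\sigma_\infty)$ for all large $N$. For $\sigma$ large this should follow from a lower bound on $g_N$ via $p_{m,\sigma}\le f_{\sigma}^{\max}V_Km^K$ in \eqref{eq:gN}, since $f^{\max}_\sigma\propto\sigma^{-K}$. For small $\sigma$, where uniformity genuinely fails, I would try the same density bound, $p_{m,\sigma}(y)\le \sup_{B(y,m)}f_\sigma\,V_Km^K$. This shows that a tiny $\sigma$ only helps the observations within $O(\sigma)$ of $\mu_0$, which have vanishing $P_0$-mass. I expect this small-$\sigma$ tail bound to be the delicate step. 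If it resists a clean proof, I would state the limit as convergence of near-minimisers restricted to scales bounded away from zero.
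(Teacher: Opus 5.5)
Your handling of the two displayed lines matches the paper. The first is Proposition~\ref{prop:es-unbiased}; your Fubini step over $Y$ just makes the $N$-free population objective explicit. The second is read off from Proposition~\ref{prop:mrmin-dimension}, whose proof computes $\E f_\sigma(Y)^{-2/K}=2\pi\sigma^2\{1-2\sigma_0^2/(K\sigma^2)\}^{-K/2}$ and its stationary point exactly as you do. The paper stops there and tacitly identifies $\lim_N\underline\sigma_N$ with the minimiser of the leading constant. You are right that convergence of the minimisers needs its own argument, but your sketch would not close as written. Dini's theorem is unavailable: there is no monotonicity of $g_N/a_N$ in $N$ to exploit, and $p_{m,\sigma}(y)$ is not monotone in $\sigma$. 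Uniform convergence on $I$ is also not needed. It suffices to have $\limsup_N g_N(\sigma_\infty)/a_N\le\phi(\sigma_\infty)$, which is Proposition~\ref{prop:mrmin-dimension} at the single point $\sigma_\infty$, together with $\liminf_N\inf_{\sigma\notin I}g_N(\sigma)/a_N>\phi(\sigma_\infty)$. Then every minimiser lies in $I$ for large $N$.

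The real gap is in that second requirement. Your large-$\sigma$ bound is fine. However, you omit the compact band between your small-$\sigma$ region and $I$, which contains the blow-up point $\sigma_0\sqrt{2/K}$ and the interval where $\phi\equiv\infty$. Pointwise divergence of $g_N/a_N$ there gives no uniform lower bound. The missing idea is a Fatou bound along convergent sequences, with $R_N(y)=\min_{i\le N}\|X_i-y\|$. If $\sigma_N\to\sigma^*>0$, joint continuity of the Gaussian density gives $N P_{\sigma_N}\{B(y,sN^{-1/K})\}\to f_{\sigma^*}(y)V_Ks^K$, hence $\liminf_N N^{2/K}\E\{R_N(y)^2\mid y\}\ge\Gamma(1+2/K)\{V_Kf_{\sigma^*}(y)\}^{-2/K}$. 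Fatou over $Y$ then gives $\liminf_N g_N(\sigma_N)/a_N\ge\phi(\sigma^*)$, with no integrability hypothesis. A subsequence argument yields $\liminf_N\inf_C g_N/a_N\ge\inf_C\phi$ on every compact $C\subset(0,\infty)$, in particular on $C=[\delta,M]$ minus an open neighbourhood of $\sigma_\infty$, where $\inf_C\phi>\phi(\sigma_\infty)$ by unimodality.

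Your small-$\sigma$ heuristic also misplaces the mechanism. For fixed $\sigma$ the simulated cloud reaches distances of order $\sigma\sqrt{2\log N}$, so the observations it serves have $P_0$-mass tending to one, not zero. What works is the union bound $\Pr\{R_N(y)\le r/2\mid y\}\le N\Pr\{\chi_K\ge r/(2\delta)\}$, valid for $r=\|y-\mu_0\|$ and all $0\le\sigma\le\delta$. It gives $g_N(\sigma)\ge\tfrac18\E\{\|Y-\mu_0\|^2;\|Y-\mu_0\|\ge2\delta q_N\}$, where $N\Pr(\chi_K\ge q_N)=\tfrac12$ and $q_N\sim\sqrt{2\log N}$. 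This decays like $N^{-4\delta^2/\sigma_0^2}$ up to logarithmic factors, far slower than $a_N\propto N^{-2/K}$ once $\delta<\sigma_0/\sqrt{2K}$. Hence $\inf_{0\le\sigma\le\delta}g_N(\sigma)/a_N\to\infty$, and your fallback weakening is unnecessary. Finally, your proposal does not address the prose claim of finite-budget movement with $N$. In the paper that claim rests on Theorem~\ref{thm:mrmin}, whose decreasing-differences premise is only checked numerically, so it is conditional rather than a consequence of the displays.
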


\begin{remark}[Magnitude of the two defects]
The defects have opposite dimension dependence.  The limiting relative bias
$\sigma_\infty/\sigma_0-1$ is $73\%$ at $K=1$ but only $0.08\%$ at $K=1200$,
so it is negligible for trajectory calibration.  Finite-budget movement behaves
in the opposite way and dominates at large $K$ (Section~\ref{sec:exp1}).  For
spacing trajectories with $K$ of order $10^3$, it is therefore movement of the
estimand with $N$, not its asymptotic bias, that makes the target depend on a
computational choice.
\end{remark}

\begin{remark}[The normalisation matters]
The $U$-statistic normalisation $1/(N(N-1))$ in \eqref{eq:es-hat} is not a
cosmetic choice. Using $1/N^2$ in its place, as one might do naively by analogy
with a plug-in sample average, gives instead
\[
\begin{aligned}
 &\E\left[\frac{1}{2N^2}
       \sum_{i\ne j}\|X_i-X_j\|\right]\\
 &\quad=\frac{N(N-1)}{2N^2}\,\E\|X-X'\|\\
 &\quad=\frac{N-1}{2N}\,\E\|X-X'\|,
\end{aligned}
\]
so that
\[
\begin{aligned}
 \E\big[\widehat{\mathrm{ES}}_N^{\,(1/N^2)}\big]
  ={}&\E\|X-y\|\\
     &-\frac{N-1}{2N}\,\E\|X-X'\|\\
  ={}&\mathrm{ES}(P,y)
     +\frac{1}{2N}\,\E\|X-X'\|,
\end{aligned}
\]
which reintroduces an $O(1/N)$ bias and, with it, the $N$-dependence that the
energy score is meant to eliminate. The $U$-statistic form in
\eqref{eq:es-hat}, rather than the superficially similar plug-in average, is
therefore required.
\end{remark}

\begin{remark}[Unbiased objective versus parameter estimate]
Proposition~\ref{prop:es-unbiased} is an exact statement about the Monte Carlo
\emph{objective}: at each fixed $\beta$, its expectation is unchanged by $N$.
It does not say that the parameter value minimising one noisy simulated objective is
exactly unbiased. Because minimisation is nonlinear, the resulting estimate generally
retains a bias of order $1/N$ under standard smoothness conditions. Thus $N$
does not redefine the population minimiser, as it does for MRMin, but it can still
affect finite-ensemble parameter estimates. Section~\ref{sec:exp2} distinguishes
these two levels of $N$-dependence empirically.
\end{remark}

The energy score is selected from the wider class of proper scores for three
task-specific reasons. It can be evaluated from simulator draws without a
tractable predictive density, it scores the complete multivariate trajectory,
and its pairwise term has the exactly unbiased finite-ensemble form in
\eqref{eq:es-hat}. A log score would require a reliable trajectory density;
moment scores such as the Dawid--Sebastiani score do not strictly identify an
otherwise unrestricted non-Gaussian distribution; and univariate CRPS would
require an additional aggregation or projection choice across time. The
variogram score below is retained as a dependence-sensitive complement rather
than a replacement because it is not strictly proper for the full joint
distribution \citep{gneiting2007strictly,jordan2019scoringrules}.

\subsection{Correlation structure: the variogram score}
\label{sec:theory-vs}

The energy score treats the predictive distribution through pairwise Euclidean
distances and is, as a consequence, comparatively insensitive to the
correlation structure across the $K$ components of a trajectory. This matters
for spacing trajectories specifically: spacing residuals are the cumulative sum
of velocity residuals, so they are strongly autocorrelated by construction, and
the aleatoric errors observed in this class of models are empirically
non-Gaussian --- fat-tailed and left-skewed, as documented for these data by
\citet{zhou2025calibration}.

To make the correlation structure itself an object of the evaluation, we report,
alongside the energy score, the variogram score of order $p$
\citep{scheuerer2015variogram},
\[
\begin{aligned}
 \mathrm{VS}_p(P,y)
   &=\sum_{k,l}w_{kl}\,d_{kl}^{2},\\
 d_{kl}
   &=|y_k-y_l|^p-\E|X_k-X_l|^p,\\
 p&=0.5 .
\end{aligned}
\]
where the sum runs over pairs of time indices $k,l \in \{1,\dots,K\}$ and
$w_{kl} \ge 0$ are fixed weights.

The diagnostic value of reporting both scores is direct: agreement between the
parameter estimates obtained by minimising $\mathrm{ES}$ and by minimising
$\mathrm{VS}_p$ indicates that the model's correlation structure is adequate for
the aspects of the data the two scores are each sensitive to. Disagreement is
direct evidence against the assumed noise structure --- for QIDM, against the
assumption that the underlying acceleration noise is white, since a violation of
that assumption changes the autocorrelation of spacing without necessarily
changing its marginal dispersion, a distinction $\mathrm{ES}$ is not designed to
detect and $\mathrm{VS}_p$ is. Section~\ref{sec:exp7} uses this disagreement as a
mechanism diagnostic.

\subsection{How uncertainty decreases with repeated trajectories}
\label{sec:theory-asymptotics}

Write $s(\beta,y)=\mathrm{ES}(P_\beta,y)$ and
$\bar S_n(\beta)=n^{-1}\sum_{j=1}^n s(\beta,y_j)$ for the exact
sample-averaged energy score over $n$ independent observed trajectories, and
\[
  \hat\beta_n = \argmin_\beta \bar S_n(\beta) .
\]
The following result describes how parameter uncertainty decreases as more independent
observed trajectories are added. This estimator belongs to the standard class of
M-estimators. Under the usual regularity conditions
\citep{vandervaart1998asymptotic} --- in particular, that $\beta_0$ is an
interior point of the parameter space, that $\bar S_n$ is twice continuously
differentiable in a neighbourhood of $\beta_0$, and that the usual
uniform-convergence and identifiability conditions hold so that $\hat\beta_n$ is
consistent --- the standard M-estimation sandwich expansion gives
\begin{equation}
\begin{aligned}
  \sqrt{n}\,(\hat\beta_n - \beta_0)
  &\;\rightsquigarrow\;
  \mathcal{N}\big(0,\; A^{-1} B A^{-1}\big),\\
  A&=\E\{\nabla^2_\beta s(\beta_0,Y)\},\\
  B&=\Var\{\nabla_\beta s(\beta_0,Y)\}.
\end{aligned}
  \label{eq:sandwich}
\end{equation}
Here $A$ measures how sharply the expected score curves around its minimum, while
$B$ measures how much its slope varies from one observed trajectory to another. The
combination $V_1=A^{-1}BA^{-1}$ is called the sandwich covariance. It describes the
large-sample uncertainty scale for one trajectory. Averaging $n$ independent
trajectories reduces the covariance to approximately $V_1/n$, so the standard errors
fall at rate $1/\sqrt n$.

One technical point deserves explicit statement rather than proof:
$\|\cdot\|$ is non-differentiable at the origin, so $\nabla_\beta
\mathrm{ES}(P_\beta,y)$ requires differentiating $\E\|X-y\|$ and $\E\|X-X'\|$
under the expectation, which in turn requires $P_\beta$ to admit a density (so
that $X = y$ and $X = X'$ are almost-sure non-events and the non-smooth point of
the norm is not visited with positive probability under the relevant measure).
We state this as a condition on $P_\beta$ rather than as a technical lemma to be
proved, since every stochastic CF model considered here generates trajectories
via absolutely continuous noise and satisfies it by construction.

Proposition~\ref{prop:es-unbiased} concerns the expectation of the finite-ensemble
objective at a fixed $\beta$; it does not remove simulation noise from its gradient
or Hessian. In computation, replacing $s$ by $\widehat s_N$ with fixed $N$ adds a
Monte Carlo component whose contribution depends on whether simulations are refreshed
or common random numbers are held fixed. Equation~\eqref{eq:sandwich} therefore
describes the exact-score limit. A finite-$N$ numerical sandwich either requires an
additional simulation-variance term or should be interpreted conditionally on the
fixed random-number streams; it converges to the exact-score expression as the
simulation approximation and its derivatives converge uniformly.

The present dataset supplies up to 12 repeated trajectories per driver. Their
per-run parameter estimates provide an empirical \emph{single-run} covariance
benchmark, not the covariance of the $n$-run average estimator in
\eqref{eq:sandwich}. Moreover, the available published per-run estimates use the
earlier one-step estimation procedure rather than the same energy-score estimator.
Section~\ref{sec:exp8} therefore treats their covariance as an external stability
benchmark for the local, fixed-common-random-number energy-score curvature, not as a
second consistent estimator of the identical covariance matrix.

\subsubsection{On resampling from a single trajectory}
\label{sec:theory-resampling}

\citet{zhou2025calibration} estimate the parameter covariance matrix by drawing
$B = 100$ overlapping blocks of length $L = 500$ from a single trajectory of
length $T = 1200$, with block start points drawn uniformly on $[1, 701]$ (so
that a block of length $500$ starting anywhere in that range stays within the
series). Two such blocks overlap unless their start points differ by at least
$L=500$.  For independent continuous uniforms on an interval of length $R$,
\[
  \Pr(|U_1-U_2|\geq L)=\left(1-\frac{L}{R}\right)^2,
  \qquad L\leq R.
\]
Here $R=700$ and $L=500$, hence the continuous convention gives
$(1-5/7)^2=4/49\approx0.0816$.  A direct reconstruction of the integer-start
scheme in \citet{zhou2025calibration}, using 20,000 sets of $B=100$ starts from
$\{1,\ldots,701\}$, gives empirical probability $0.082694$; the exact discrete
probability is $0.082625$.  The small difference is solely the endpoint
convention. Under either convention non-overlap occurs only about $0.082$ of
the time and the great majority of block pairs share data. Overlap is legitimate
within a moving-block bootstrap, but the resulting block-specific estimates cannot
be interpreted as independent parameter repetitions without accounting for their
dependence. Separately, the chosen block length $L = 500$ is $0.42\,T$. Orders such
as $T^{1/3}$ arise for particular variance and bias problems
\citep{hall1995blocking}; they are not universal, but they illustrate that an
asymptotically valid block length normally shrinks relative to $T$ and must be
matched to the statistic and dependence structure.

The repeated-measurement structure of the present dataset --- up to 12 independent
runs per driver under an identical stimulus --- provides a direct empirical
single-run stability benchmark without choosing a block length. As stated in
Section~\ref{sec:theory-asymptotics}, the available published per-run estimates do
not use the same estimator as the energy-score sandwich calculation, so the two are
reported as complementary diagnostics rather than interchangeable covariance
estimates.
The known-truth coverage experiment in Section~\ref{sec:exp8} quantifies the
difference. At nominal levels $0.50$, $0.80$, $0.90$, and $0.95$, the
overlapping-block intervals cover $0.447$, $0.681$, $0.763$, and $0.800$,
respectively, whereas intervals based on 12 independent repetitions cover
$0.475$, $0.797$, $0.895$, and $0.956$; the complete curves appear in
Figure~\ref{fig:exp8-coverage}.

\section{Synthetic validation of the propriety results}
\label{sec:synthetic}

\subsection{Experimental design}
\label{sec:synthetic-design}

The synthetic design directly stress-tests the fixed-$N$, discretised-grid evidence
that supported MRMin in \citet{zhou2025calibration}. It makes recovery harder in ways
that reveal the inferential properties of the criteria rather than their ability to
exploit a convenient grid.
Ground-truth parameters are placed off the evaluation grid, preventing exact recovery
by construction. The main joint-recovery calibrations (Section~\ref{sec:exp3})
use a bounded continuous optimiser rather than an
exhaustive search, so the result reflects the objective surface and not the spacing of
a candidate set; the two-parameter joint diagnostic of Section~\ref{sec:exp2} is the
one exception, and is deliberately grid-based for a reason given there. All
behavioural parameters, including desired speed $v_0$, are
estimated jointly with the stochastic parameters. Finally, measured-magnitude GNSS
error is added to the synthetic observations in the recovery experiments, separating
process variation from observation noise under the conditions faced by the empirical
analysis. These choices strengthen the test relative to designs in which the truth is
a grid point or deterministic parameters are held fixed.

The lead trajectories are selected before inspecting calibration outcomes and span
free acceleration, cruising, following, deceleration, and near-standstill regimes.
Because different parameters become informative in different regimes, recovery is
assessed over the complete trajectory rather than a hand-selected interval. Common
random numbers are used within an optimisation to reduce simulation noise while
independent replicate observations measure sampling variation. Table~\ref{tab:models}
fixes the model scope and parameter domains for the joint-recovery experiment.

\begin{table*}[t]
\centering
\caption{Stochastic car-following models, calibrated parameters, ground truths, and
bounds used in the joint-recovery study. Deterministic and stochastic parameters are estimated
jointly; speeds are in km/h and all remaining quantities use SI-compatible units.}
\label{tab:models}
\begin{tabular}{@{}p{0.18\linewidth}p{0.26\linewidth}p{0.19\linewidth}p{0.25\linewidth}@{}}
\toprule
Model & Calibrated parameters & Ground truth & Lower and upper bounds \\
\midrule
QIDM & $v_0,a,b,s_0,T,Q$ & 73.1, 1.37, 2.63, 1.87, 0.77, 0.47 & [40,100], [0.5,3], [0.5,5], [0.5,5], [0.1,1], [0.02,2] \\
2D-IDM & $v_0,a,b,s_0,T_{\min},\Delta T$ & 73.1, 1.37, 2.63, 1.87, 0.53, 0.47 & [40,100], [0.5,3], [0.5,5], [0.5,5], [0.1,1], [0.01,1] \\
\bottomrule
\end{tabular}
\end{table*}

\subsection{Numerical verification of the Gaussian theory}
\label{sec:exp1}

The Gaussian experiment first separates the score-level mechanisms that appeared
together in the model-specific error analysis of \citet{zhou2025calibration}. It
supplies numerical shadows of Section~\ref{sec:theory},
not independent evidence for its analytic statements.  We set $\sigma_0=1$ and
evaluate the population formula whenever it is available.  For MRMin, scalar
Gaussian-tail quadrature is used at $K=1$; for $2\leq K\leq20$,
generalised Gauss--Laguerre radial quadrature is combined with conditional
noncentral-chi-square quadrature, while fixed scrambled-Sobol radial draws are
retained at larger $K$.
Common draws across $(N,\sigma)$ prevent optimiser movement from being driven by
independent Monte Carlo noise.

Figure~\ref{fig:exp1-scores} separates five views of the objective geometry.
MRMean-I is minimised at zero scale, MRMean-II is flat, and the energy score is
uniquely minimised at the data-generating scale.  MRMin instead changes target
with $N$.  Panel (e) isolates the practical mean run-wise RMSE from
Remark~\ref{rem:gof-generalises}: it and the proposed criterion differ by the
single pairwise term $-\tfrac12\E\|X-X'\|$, and that term determines whether the
population objective collapses at zero dispersion or has an interior minimum
at $\sigma_0$.

\begin{figure}[t]
\centering
\includegraphics[width=0.74\linewidth]{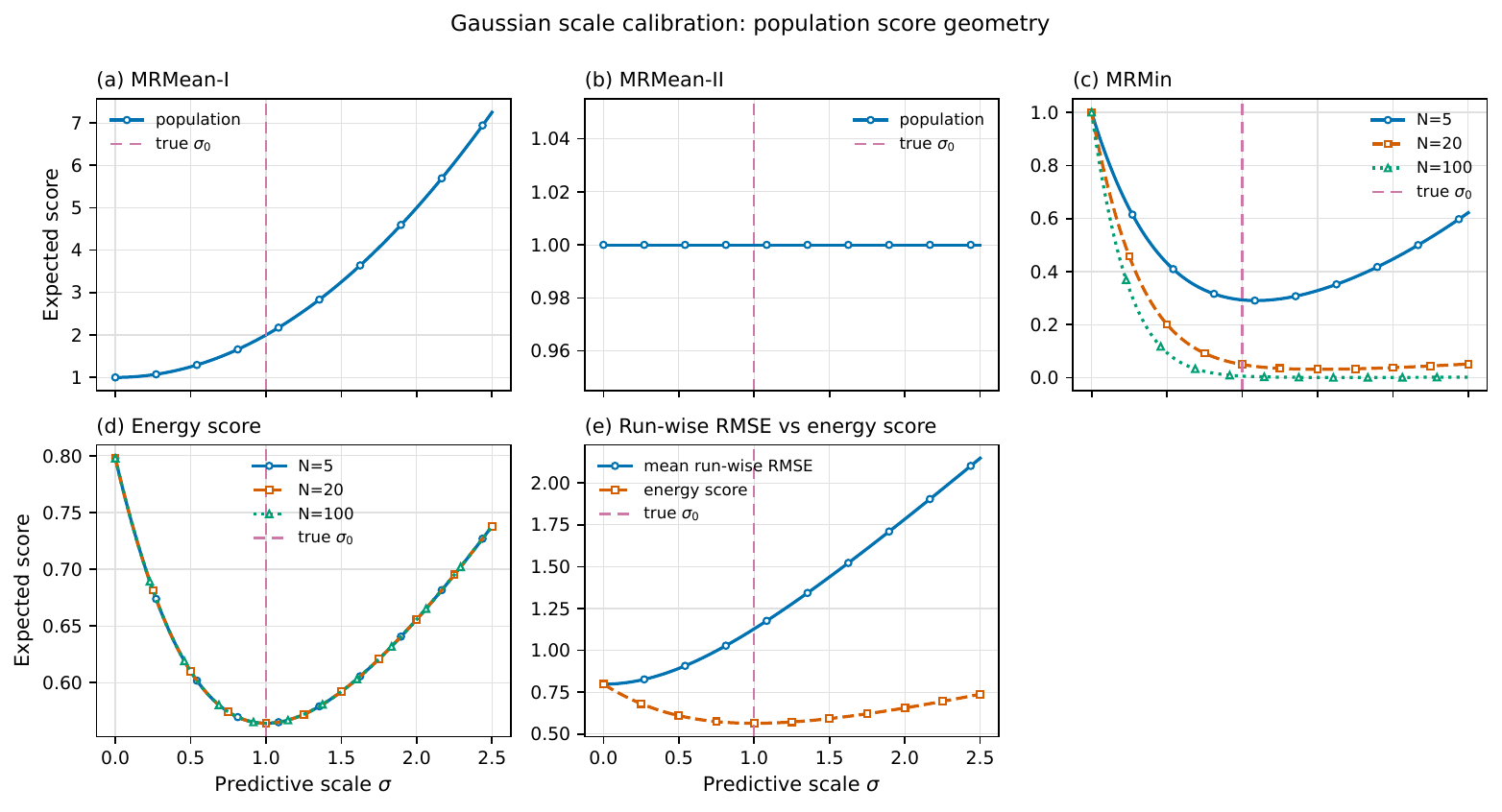}
\caption{Expected score against predictive scale $\sigma$ in the correctly
centred scalar Gaussian toy. Panels (a)--(d) show MRMean-I, MRMean-II, MRMin,
and the energy score; panel (e) compares mean run-wise RMSE with the energy
score. The vertical line marks $\sigma_0=1$. Coincident energy-score curves
show that ensemble size changes simulation precision but not the expected
energy-score objective.}
\label{fig:exp1-scores}
\end{figure}

Table~\ref{tab:theory-verification} collects the numerical checks.  The entries
test implementations and finite-range implications of the analytic results;
they do not replace the assumptions or proofs in Section~\ref{sec:theory}.

For the high-dimensional finite-budget approximation, 35 logarithmically spaced
values of $N=10$--$2000$ were fitted with three forms on the same response scale:
$cN^a$, $a+b\log N$, and $a+b\sqrt{\log N}$. The square-root form has the largest
$R^2$ at every tested $K\in\{1,10,100,1200\}$ (Table~\ref{tab:theory-verification}),
and the finite-range growth exponents it implies fall far below the historical
$N^{1/K}$ reference except near $K=10$: at $K=1200$, the reference predicts
$2000^{1/1200}-1=0.6\%$ growth over the tested range against an observed movement
exceeding $130\%$. Figure~\ref{fig:exp1-dimension} plots the minimisers, fitted
curves, and reference slopes for the full comparison.

At twelve fixed $(K,\sigma_1)$ combinations, $g_N(\sigma_1)$ decreases strictly
over $N=10,\ldots,2000$, as expected from nearest-neighbour coverage.  This
pointwise decrease is not a divergence diagnostic: the small-scale limit is not
uniform in $N$, and the scalar extension instead approaches the finite value in
\eqref{eq:mrmin-finite-limit}.

The scalar large-$N$ calculation gives the particularly sharp check
$\sqrt{1+2/1}=\sqrt3=1.73205$: numerical minimisers are already
$1.731$--$1.732$ at $N=10^4$--$10^5$.  This agreement is the strongest direct
numerical confirmation of the nearest-neighbour derivation
\eqref{eq:mrmin-nn-asymptotic}--\eqref{eq:mrmin-finite-limit}.  The extended
decreasing-differences grid is more qualified.  Every one of the 12
configurations with $K\geq10$ has a negative adjacent-difference fraction of
$1.000$, while scalar fractions range from $0.378$ to $0.815$; hence that route
is a sufficient conditional argument, not a global scalar proof.

The final two rows of Table~\ref{tab:theory-verification} impose a direct
constraint on implementation: replacing $1/[N(N-1)]$ by $1/N^2$ introduces a
bias exactly proportional to $1/N$, reinstating the very $N$-dependence the
method is designed to remove.

\subsection{Ensemble-size sensitivity in car-following calibration}
\label{sec:exp2}

The earlier MRMin comparison fixed $N$ and therefore could not reveal whether the
method recovered a stable parameter or an $N$-specific target. This experiment makes
$N$ the controlled variable and translates the three objective pathologies into
calibration signatures.  For a pure dispersion coordinate, MRMean-I predicts a
boundary estimate, MRMean-II predicts weak identification governed by finite
simulation noise and bounds, and MRMin predicts motion with $N$.  The energy
score has an $N$-invariant population objective, although its finite-sample
argmin can still vary.

For QIDM, the stochastic intensity is a pure scale coordinate under the
controlled design. MRMean-I reaches the zero boundary, MRMean-II remains broadly
dispersed, and MRMin changes with $N$. The energy-score median at $N=500$ is
$0.382$ for one observed trajectory, against $Q_0=0.47$. Increasing the number
of observed trajectories to $2,4,8,$ and $12$ gives medians $0.453$, $0.467$,
$0.432$, and $0.453$. Most of the $n=1$ bias disappears, without monotonic
finite-sample convergence; this is argmin sampling error rather than a moving
population minimiser.

Example~1 in Section~\ref{sec:mrmean2} predicts a different geometry for
2D-IDM.  With $T_{\min}$ held at its true value, every criterion can return
$\Delta T_0$ through the mean channel, so the design cannot discriminate among
dispersion criteria.  The observed agreement near
$\widehat{\Delta T}=0.727$ against $\Delta T_0=0.73$ is consistent with $b=0$.

The prediction was tested by deliberately fixing the fitted lower endpoint at
$T_{\min}=0.7$ rather than its truth $0.6$, so $b=0.1$.  Example~1 predicts
$\widehat{\Delta T}=0.53$.  Over 50 paired MRMean-II calibrations with $N=1000$,
the estimate had median $0.54583$, mean $0.54601$, and IQR $0.01762$: the shift
is close to the explicit mean-channel prediction, with the residual difference
attributable to nonlinear trajectory dynamics and a single verifying path per
replicate.

The same observations were then calibrated jointly in $(T_{\min},\Delta T)$ under
all four criteria. At $N=2000$, coordinate correlations were $-0.9949$, $-0.9971$,
$-1.0000$, and $-0.9969$, with median pairs $(0.82,0.307)$, $(0.73,0.505)$,
$(0.82,0.307)$, and $(0.55,0.802)$, for MRMean-I, MRMean-II, MRMin, and the energy
score respectively. Because this diagnostic deliberately evaluates the full objective
surface on an $11\times11$ grid with local quadratic refinement, rather than the
continuous optimiser of Section~\ref{sec:exp3}, MRMean-I and MRMin return an
identical median pair at every tested $N$: a grid-resolution artefact, since both
objectives are locally flat enough near their optimum to fall back to the same
vertex, not a substantive equivalence. The coarse-grid ranking by distance to
$(0.60,0.73)$ is accordingly not treated as a recovery result.

To resolve the discretisation issue, a second diagnostic evaluates 51 values at
0.01-s spacing along the known data-generating line
$T_{\min}+\Delta T/2=0.965$, which includes the truth exactly, for 50 new
$N=2000$ calibrations. The resulting median pairs are $(0.789,0.352)$,
$(0.616,0.698)$, $(0.678,0.574)$, and $(0.623,0.685)$ for MRMean-I,
MRMean-II, MRMin, and the energy score, with Euclidean errors $0.422$, $0.036$,
$0.174$, and $0.050$~s. MRMean-II's close median is not identification: the
diagnostic supplies the true mean constraint that its population objective can
otherwise learn without determining width. The energy score has the narrowest
profile distribution (IQR $0.073$~s in $T_{\min}$, equivalently $0.146$~s in
$\Delta T$), compared with $0.088$ and $0.176$~s for MRMean-II. Thus the robust
finding is sharper conditional concentration from the dispersion reward, not
that one coarse coordinate median alone proves recovery. Individual estimates
remain spread along the weakly identified ridge.

Equivalently, define the mean and width coordinates
\[
  m_T=T_{\min}+\frac{\Delta T}{2},
  \qquad w_T=\Delta T.
\]
The experiment identifies $m_T$ much more strongly than $w_T$: the diagonal in
Figure~\ref{fig:exp2-joint} is a constant-$m_T$ line, and movement along it changes
the stochastic width while preserving the dominant mean trajectory. We therefore
interpret the joint ridge and the profile in $(m_T,w_T)$ terms; fixing $T_{\min}$
at truth would artificially remove the weakly identified direction rather than
demonstrate that a criterion recovered it.

The energy-score estimator uses the unordered-pair sum divided by $N(N-1)$; as an
implementation check, the MRMin optimum increases monotonically with $N$ in the
scalar Gaussian setting of Theorem~\ref{thm:mrmin} ($0.55$, $0.95$, $1.275$, $1.325$
over $N=2,5,20,100$), and repeated evaluations of the QIDM energy-score objective
under identical common random numbers reproduce the objective value bit-for-bit.
Figure~\ref{fig:exp2} shows the marginal $N$-sensitivity signatures,
Figure~\ref{fig:exp2-joint} shows the joint identification ridge, and
Table~\ref{tab:exp2} reports the corresponding distributional summaries.

\begin{figure}[t]
\centering
\includegraphics[width=0.74\linewidth]{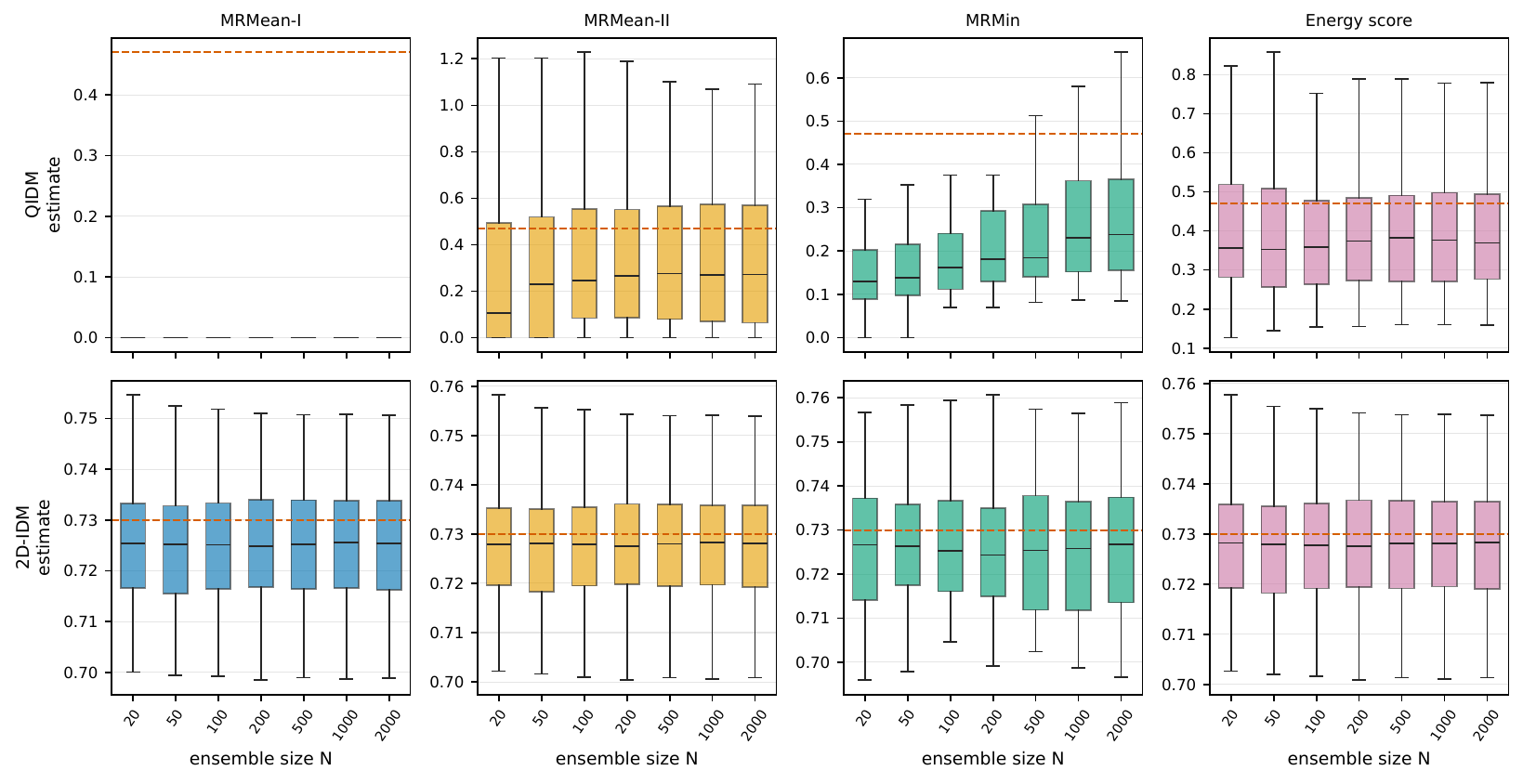}
\caption{Estimated stochastic parameter against ensemble size $N$ for QIDM and
2D-IDM over 50 paired calibrations.  The horizontal line marks the data-generating
value.  The 2D-IDM agreement is interpreted through the mean-channel diagnostic
of Example~1 rather than as evidence that all criteria identify dispersion.}
\label{fig:exp2}
\end{figure}

\begin{figure}[t]
\centering
\includegraphics[width=0.72\linewidth]{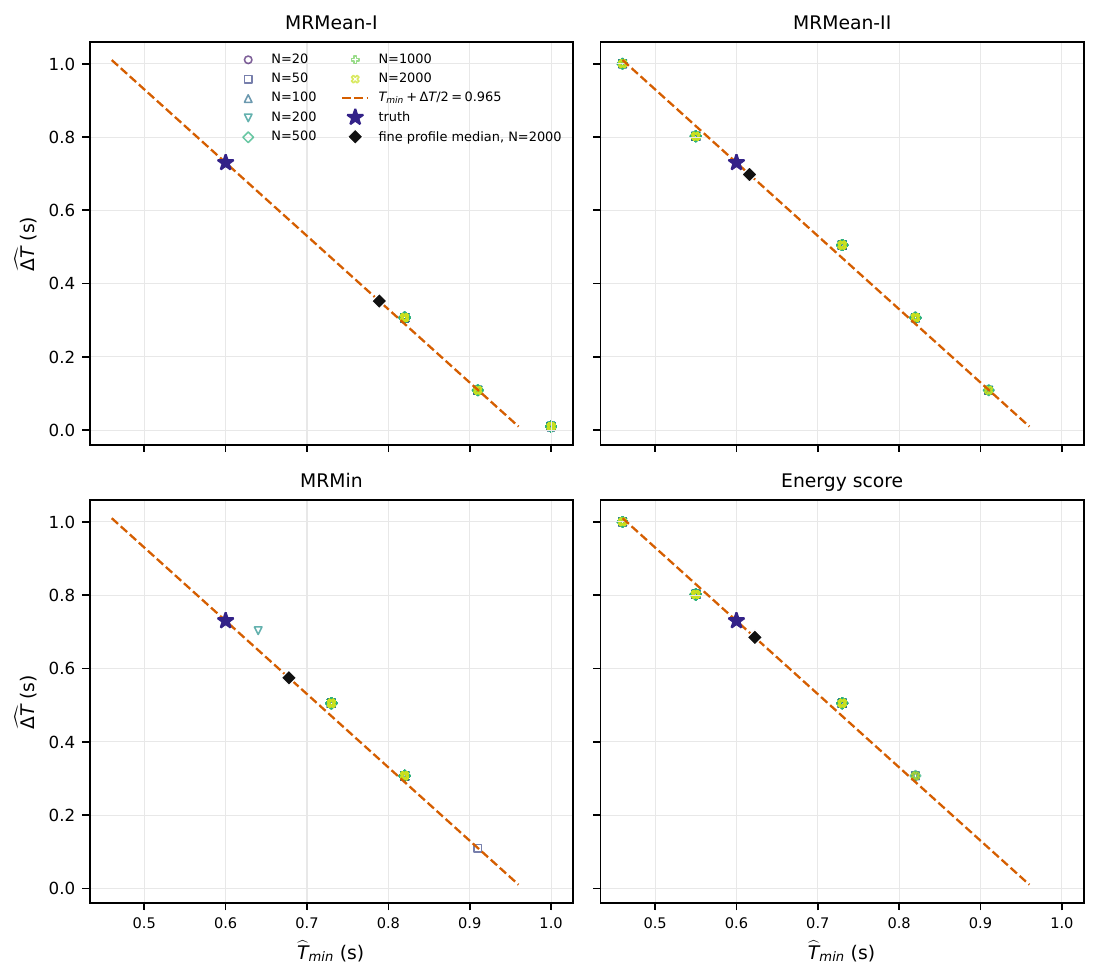}
\caption{Joint 2D-IDM estimates under the four criteria. Open symbols show 50
repetitions at each $N$ on the original $11\times11$ two-dimensional grid. The
star marks the off-grid truth $(T_{\min},\Delta T)=(0.60,0.73)$; the diagonal is
the iso-mean line $T_{\min}+\Delta T/2=0.965$; and the black diamond marks the
median from the separate 51-point, $N=2000$ fine profile conditional on that
line. Marginal $\Delta T$ values and coarse-grid distance rankings are not
interpreted apart from this joint geometry.}
\label{fig:exp2-joint}
\end{figure}

\subsection{Joint recovery of deterministic and stochastic parameters}
\label{sec:exp3}

The near-zero MRMin errors reported by \citet{zhou2025calibration} were obtained by
traversal over a discretised parameter space containing the planted values. This
experiment asks whether that ranking survives when exact grid recovery is unavailable.
It tests whether the one-parameter signatures survive joint
calibration of deterministic and stochastic coordinates under observation
noise. Each model has the fixed off-grid truth reported in
Table~\ref{tab:models}; independent observation paths are generated for every
replicate and criterion. Recovery is summarised by parameter-wise absolute percentage
error (APE). Because an overall AAPE gives five deterministic coordinates five times
the weight of the single stochastic coordinate, Table~\ref{tab:exp3} reports
deterministic-coordinate AAPE and stochastic-coordinate APE separately. It also
reports range-normalised mean absolute error,
\[
  100p^{-1}\sum_{j=1}^{p}
  \frac{|\widehat\theta_j-\theta_{0j}|}{U_j-L_j},
\]
where $[L_j,U_j]$ is the prespecified calibration range. This third measure is
dimensionless without dividing by a potentially small planted value. Overall AAPE is
retained in the numerical archive for comparison with the earlier study but is not
used as the headline ranking. Deterministic
parameters are estimated jointly with the stochastic coordinates, rather than
being held at their data-generating values, so a low total error cannot conceal
failure concentrated in dispersion.  Common random numbers are fixed within an
optimisation and independent observations measure sampling variation. The design
contains 160 calibrations: two models, four criteria, two tests, and ten
repetitions, using two lead profiles. The main GNSS spacing-error SD is $0.20$ m,
with sensitivity runs at $0.10$ and $0.35$ m. Energy score is not uniformly best in
joint finite-sample recovery. It gives the smallest QIDM stochastic-coordinate error,
whereas 2D-IDM is best under MRMean-II because $\Delta T$ also moves the mean channel.
Range normalisation does not reverse those rankings.
Figure~\ref{fig:exp3} resolves these errors by parameter and model, while
Table~\ref{tab:exp3} separates the deterministic, stochastic, and joint summaries.

\subsection{When exact MLE is available}
\label{sec:exp4}

MLE can be exact or approximate, and the distinction matters. Exact MLE requires the
model to give the probability density of the observed data. For a time-stepping
model, this usually means that we can evaluate the probability of the next observed
state given the current state and the parameters. \citet{zhou2025calibration} instead
used a Gaussian approximation for complete-trajectory residuals and treated residuals
at different times as independent with constant variance. That approximation is not
the probability law used by QIDM or 2D-IDM to generate the next state. Its poor
recovery therefore does not show that MLE removes stochasticity; it shows that an
incorrect likelihood can give an incorrect estimate.

The fourth experiment creates a controlled case in which this next-state density is
known and easy to evaluate. QIDM is fitted in two ways: by exact one-step MLE and by
the four criteria applied to complete simulated trajectories. If exact MLE recovers
the parameter but a trajectory criterion does not, the stochastic parameter is not
inherently impossible to estimate; the difference comes from the calibration
objective or from information lost when one complete trajectory is reduced to a
single score.

We call this an \emph{oracle conditional} experiment because it deliberately gives
MLE information that would not all be known in a normal application: the deterministic
IDM parameters are fixed at their true values, the current state has no added GNSS
error, and acceleration is reconstructed with the same update equation that generated
the synthetic path. Given this information, the one-step errors satisfy
$e_k\sim\mathcal N(0,Q/\Delta t)$ exactly, giving the closed-form estimator
\[
  \widehat Q_{\mathrm{MLE}}=\Delta t\,K^{-1}\sum_{k=1}^{K}e_k^2.
\]
If the state is measured with error or the deterministic parameters are also unknown,
this simple formula is no longer the exact likelihood; a state-space model or another
method for latent states would be needed.

The comparison uses the same off-grid truth, observation paths, parameter bounds,
and reporting metrics across methods. Exact one-step MLE in the controlled setting recovers $Q$
essentially without bias, whereas full-trajectory criteria reflect their
distinct targets and the information loss from one realised path rather than
a normalisation error or an unstable finite-ensemble argmin. The result
establishes a practical distinction: use exact MLE when the model can evaluate
the probability of each observed state change; use a proper trajectory score when
the simulator can generate trajectories but cannot evaluate their probability.
Figure~\ref{fig:exp4} displays the full sampling distributions of $\widehat Q$;
Table~\ref{tab:exp4} reports their median, bias, and IQR.

\begin{figure}[t]
\centering
\includegraphics[width=0.68\linewidth]{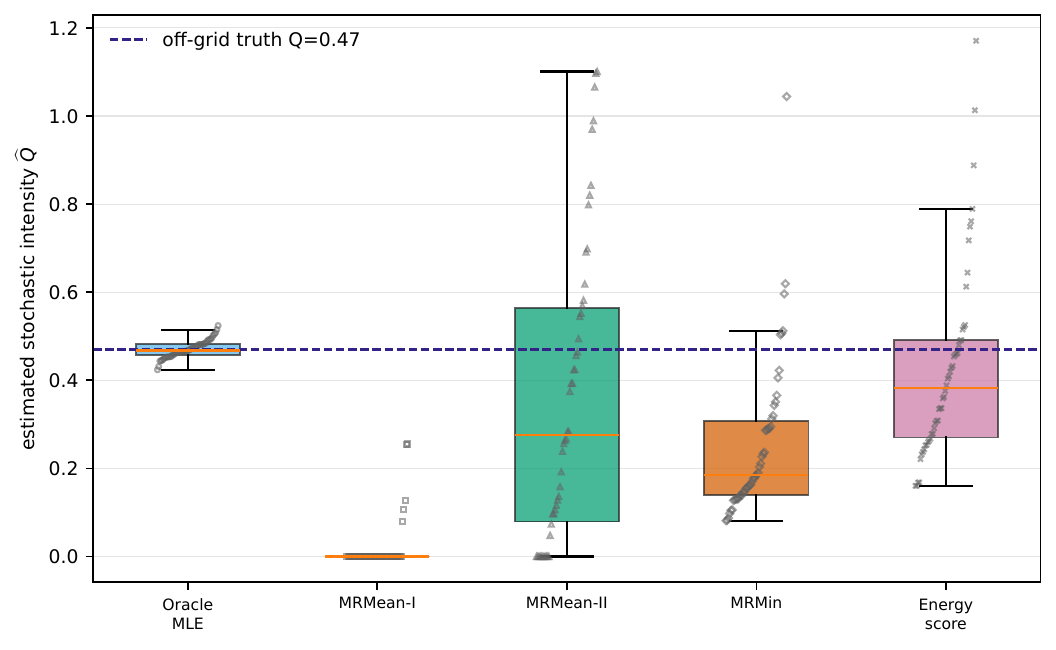}
\caption{Recovery of the QIDM stochastic intensity $Q$ under exact one-step MLE in
the controlled setting and multi-step simulation-based calibration. Points show
independent estimates, boxes show their distributions, and the horizontal line
marks the off-grid truth $Q_0=0.47$. Each method has 50 independent estimates;
simulation criteria use $N=500$. The controlled case fixes deterministic parameters at truth
and uses synthetic states without added measurement error.}
\label{fig:exp4}
\end{figure}

\begin{table}[t]
\centering
\caption{QIDM recovery under the controlled exact one-step MLE baseline and simulation-based
scoring rules.}
\label{tab:exp4}
\begin{tabular}{lrrr}
\toprule
Criterion & Median $\widehat Q$ & Bias & IQR \\
\midrule
Exact one-step MLE (controlled setting) & 0.468 & -0.002 & 0.025 \\
MRMean-I & 0.000 & -0.470 & 0.000 \\
MRMean-II & 0.275 & -0.195 & 0.486 \\
MRMin & 0.185 & -0.285 & 0.168 \\
Energy score & 0.382 & -0.088 & 0.221 \\
\bottomrule

\end{tabular}
\end{table}

\subsection{Computational cost}
\label{sec:cost}

Strict propriety is useful only if its computational cost is explicit. All experiments
were run with Python~3.12.13, NumPy~2.4.6, and SciPy~1.18.0 on an AMD Ryzen~7 H~260
(8 cores, 16 logical processors), holding common-random-number policies constant.
MRMean requires $O(NK)$ trajectory comparisons, exact MRMin has the same leading
comparison order after simulation, and a direct energy-score implementation adds an
$O(N^2K)$ pairwise term. A dedicated single-process score-only benchmark at
$K=1200$ excludes simulation and optimisation. At $N=20,100,500,$ and $2000$,
median energy-score evaluation times are $0.129$, $2.222$, $48.866$, and
$823.573$ ms, versus $0.027$, $0.385$, $2.324$, and $10.303$ ms for MRMin;
the full method-by-$N$ timings are archived in the numerical audit. Whole-calibration
wall times are not compared because the objective landscapes trigger different
numbers of optimiser evaluations. The $O(N^2K)$ pairwise term is the practical cost of
propriety, and production use of the energy score should chunk the exact
pairwise calculation or use a separately validated unbiased pair estimator.

\section{Empirical validation with repeated-measurement trajectories}
\label{sec:experimental}

\subsection{Data and experimental setup}
\label{sec:experimental-setup}

The empirical study uses 11 human drivers following an autonomous lead vehicle over a
nominal 12-run repeated protocol.
The lead vehicle executed the same prescribed speed programme in each run, covering
acceleration, cruise, car-following, braking, and standstill. Position and speed were
recorded by GNSS at 10 Hz, with provider-stated speed accuracy $0.01$--$0.02$ km/h
and spacing accuracy $0.007$--$0.35$ m. The released signals were Fourier low-pass
filtered; filter-order metadata were unavailable and are not inferred.
Run-specific measurement-error variances were also unavailable, so they cannot be
entered as regression weights or covariates. The synthetic sensitivity at spacing-error
SD $0.10$, $0.20$, and $0.35$ m brackets the upper part of the stated range.
Front-to-front
spacing is retained as the measure of
performance, while the dynamical models subtract a vehicle length of
4.6 m when a net gap is required.

The design is repeated, but the released source is not a complete rectangular panel.
It contains 109 observed driver--run cells out of
132 nominal cells; absent cells were never recorded and are not post-hoc
exclusions. The retained lead trajectories follow the prescribed programme but are
not pointwise identical, so each simulation uses the measured lead path from its own
run. No present run is silently excluded. Driver~5/run~1 is the strongest spacing-DTW
candidate and driver~8/run~9 has the largest terminal-speed anomaly; both remain in the
main analysis, with filter and driver sensitivity reported separately.

Repeated measurement is the empirical distinction on which Sections~\ref{sec:exp5}--
\ref{sec:exp8} rely. Within a driver, observed repetitions form an ensemble that can
be compared with a simulated ensemble without designating one stochastic path as
truth. Independent runs also permit direct sample covariance of parameter estimates,
avoiding assumptions about overlapping blocks from one time series. Finally, the
common stimulus lets the analysis compare how variability changes across driving
regimes, which is what separates candidate stochastic mechanisms rather than merely
ranking their trajectory fit. This is a trajectory-level counterpart to the two-level
principle of \citet{punzo2020twolevel}: calibration of microscopic behaviour and
validation of the distribution it induces are distinct operations.

\subsection{Ensemble-to-ensemble distributional calibration}
\label{sec:exp5}

The earlier empirical analysis argued that a single MRMin-calibrated parameter set
could reproduce the range of repeated trajectories. The first empirical experiment
tests that claim out of sample and replaces single-trajectory comparison with a two-sample
energy distance between simulated and observed ensembles. For every driver and model,
the last two available repetitions are fixed as the test set. The stochastic
coordinate is then recalibrated either against one prespecified training trajectory
with ES or against the full training ensemble with two-sample energy distance;
deterministic coordinates are fixed at the training-run mean of independent one-step
estimates. Repeated-set calibration lowers held-out distance by a median $1.368$ for
QIDM ($p=0.041$) and $5.033$ for 2D-IDM ($p=0.226$). The distributional gain is
therefore resolved for QIDM but not at driver level for 2D-IDM.
Figure~\ref{fig:exp5} shows the paired driver-level held-out distances underlying
these median differences and permutation tests.

\subsection{Out-of-sample predictive performance}
\label{sec:exp6}

This experiment advances from validation-error summaries and selected prediction
profiles in \citet{zhou2025calibration} to held-out distributional diagnostics.
Coverage, PIT, spread ratio, and the variogram score are external to the four
calibration objectives; held-out energy score evaluates the energy-score criterion
on new data rather than introducing a different scoring functional.
A best-of-$N$ validation statistic would favour parameters that inflate dispersion
because a wider ensemble is more likely to contain an unusually close trajectory.
We instead evaluate coverage and width of predictive bands, PIT uniformity, the ratio
of simulated to observed standard deviation, and held-out energy and variogram
scores. The criterion-external diagnostics determine whether the ranking persists
beyond the score optimised during calibration.

The predictive-band panels compare nominal and empirical coverage across time
and driving regimes. A calibrated distribution should cover held-out paths without
achieving that coverage through excessive width; PIT histograms should be uniform
rather than U-shaped, peaked, or skewed. Table~\ref{tab:exp6} summarises all 109
leave-one-run-out folds from 11 drivers. The energy-score criterion gives the best
coverage ($42.1\%$ against the nominal $90\%$), PIT--KS ($0.505$), spread ratio
($0.738$), held-out energy score ($60.11$), and held-out variogram score ($610.84$),
followed by MRMin. This is a real result, not a
flattering one: even the best-performing criterion covers under half its nominal
target, and MRMean-I and MRMean-II are far worse still, at $0.001$ and $0.017$.
The gap is diagnostic rather than merely disappointing. A single scalar dispersion
parameter, calibrated against a stimulus with its own deterministic mismatch to the
model, cannot inflate the predictive band enough to cover that mismatch without
destroying the sharpness a proper score is designed to preserve; widening the band
further would trade one failure for another. The ranking among criteria is
informative evidence for which objective is least wrong, not a claim that any of
them is adequate for deployment as calibrated here. Calibrating the stochastic
intensity alone cannot repair deterministic structural error in the empirical
trajectories.
Figure~\ref{fig:exp6-bands} illustrates the time-resolved predictive bands for a
predefined typical fold, selected by proximity to the median energy-score coverage,
and Figure~\ref{fig:exp6-pit} shows
the pooled PIT shape used alongside the aggregate diagnostics in Table~\ref{tab:exp6}.

\begin{figure}[t]
\centering
\includegraphics[width=0.76\linewidth]{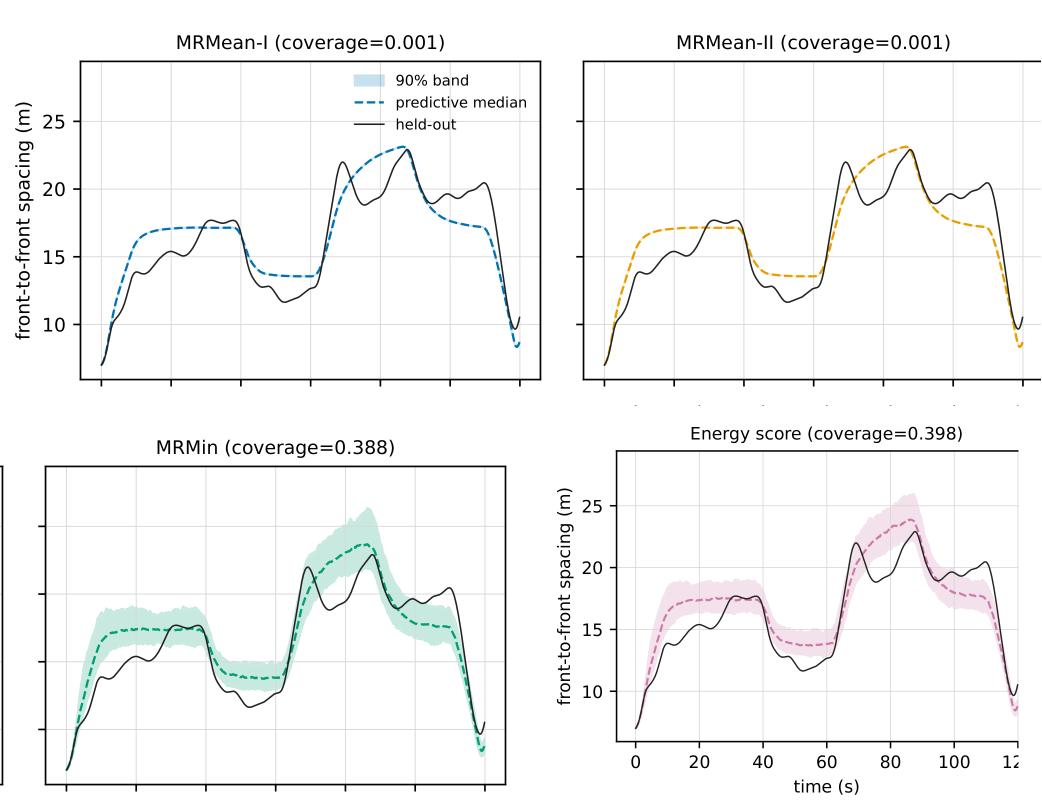}
\caption{Held-out spacing trajectories and $90\%$ predictive bands by calibration
criterion. One panel is shown for MRMean-I, MRMean-II, MRMin, and the energy score;
the solid line is the held-out observation, the central line is the predictive median,
and the coloured envelope is the pointwise predictive band. Driver~11/run~6 is
selected algorithmically because its energy-score coverage, $0.3975$, equals the
median across all 109 folds (ties are broken by lower held-out ES). Its MRMin coverage
is $0.3883$; aggregate results use every fold.}
\label{fig:exp6-bands}
\end{figure}

\begin{table*}[t]
\centering
\caption{Out-of-sample predictive diagnostics from 109 empirical QIDM leave-one-run-out
folds across 11 drivers. Coverage refers to the nominal $90\%$ band,
PIT--KS is the Kolmogorov--Smirnov statistic against uniformity, and std-ratio is
simulated divided by observed across-run standard deviation.}
\label{tab:exp6}
\begin{tabular}{lrrrrr}
\toprule
Criterion & Coverage & PIT--KS & Std-ratio & Energy score & Variogram score \\
\midrule
MRMean-I & 0.001 & 0.762 & 0.000 & 71.67 & 799.68 \\
MRMean-II & 0.017 & 0.751 & 0.019 & 71.15 & 788.94 \\
MRMin & 0.344 & 0.541 & 0.557 & 60.69 & 626.99 \\
Energy score & 0.421 & 0.505 & 0.738 & 60.11 & 610.84 \\
\bottomrule

\end{tabular}
\end{table*}

\subsection{Which features of real variability do the models reproduce?}
\label{sec:exp7}

The earlier analysis separated random run-to-run variation from uncertainty about the
fitted model, but it did not explain the source of the run-to-run variation. Repeated
trajectories let us ask more than which model has the smallest overall error: we can
test whether a model reproduces \emph{when} variability appears and \emph{how long}
a disturbance lasts. We examine three patterns chosen before fitting. Speed-scaled
noise predicts that across-run spacing variability grows
systematically with speed even after controlling for transitions. Action-point or
state-switching mechanisms predict bursts associated with large $|\Delta v|$ or
acceleration and persistence between switches. Additive white acceleration noise
predicts variance accumulated through integration and an aleatoric residual with
little remaining serial correlation once the deterministic response is removed.
The persistent-acceleration reference used below does not mean that the vehicle
continues to accelerate in one direction. It means that the acceleration innovation
left after the deterministic CF response has been removed is serially correlated.
Specifically, its standardised innovation follows
$u_{t+\Delta t}=\rho u_t+\sqrt{1-\rho^2}\epsilon_t$, with $\rho=0.92$ and
$\epsilon_t\sim\mathcal N(0,1)$ independently. A positive or negative innovation
therefore tends to retain its sign and magnitude over nearby updates before decaying.
This AR(1) construction is a diagnostic reference, not a claim that it is the
uniquely identified physical law of driver behaviour.
These signatures can overlap in total variance, but they differ in when that variance
appears and how long residual innovations persist.

The first diagnostic regresses the across-run standard deviation of spacing on speed,
absolute relative speed, and acceleration. The response is computed within driver at
each aligned time point, so the regression concerns repeat-to-repeat variability under
a common stimulus rather than pooled differences between drivers. Driver fixed effects
absorb stable following style, and uncertainty is clustered by driver and time block to
avoid treating the dense 10-Hz observations as independent.
Table~\ref{tab:exp7} reports coefficients on a common scale, their uncertainty,
and the incremental explanatory contribution of each mechanism-linked predictor.
Speed is positive for all 11 drivers and across four filters, but is null within the
middle-speed third. Absolute relative speed is weaker and positive for 8/11 drivers;
acceleration is not significant. Speed scaling is robust across drivers and filters
but not separable from regime everywhere, while the tested contemporaneous
acceleration association is unsupported.

Figure~\ref{fig:exp7-signatures} in the electronic companion forces the regression
result back onto the trajectory, overlaying observed speed-dependent variability with
simulated reference signatures for white-noise QIDM, 2D-IDM, and persistent
acceleration on a common follower-speed coordinate. A coefficient alone is not
sufficient: a speed association can arise because speed indexes a particular segment
of the prescribed trajectory, while an acceleration association can reflect filtering
at regime boundaries. The sensitivity analysis therefore repeats the fit by driver, by
regime, with alternative low-pass settings, and after adding the measured
lead-trajectory deviation as a covariate. Mechanism claims below are limited to
features that survive those checks.

The second diagnostic asks whether an acceleration disturbance disappears immediately
or carries over to later times. We subtract the acceleration predicted by the
deterministic part of the model, scale the remaining error by the model's predicted
noise size, and correlate that residual with itself at later times. White-noise QIDM
predicts substantially faster correlation decay than a persistent-disturbance process;
2D-IDM and action-point mechanisms predict persistence because the latent desired
headway or action state survives across updates. Figure~\ref{fig:exp7-acf} shows
the empirical autocorrelation with driver-level uncertainty beside model-generated
reference envelopes. Disagreement between energy- and variogram-score calibrations
can localise a deficiency that Euclidean energy distance alone may not expose, but
agreement is only a weak diagnostic and does not establish adequate dependence. Across the 44
driver--method cells in the out-of-sample validation, ES and VS have Spearman correlation $0.713$
($p=5.53\times10^{-8}$) and exactly the same ordering: energy score, MRMin,
MRMean-II, then MRMean-I. The two diagnostics therefore agree on ranking even though
the residual ACF is inconsistent with the tested white-noise reference at long lags.

The final diagnostic separates between-driver heterogeneity from within-driver
aleatoric variation. A variance decomposition is performed at each regime and then
aggregated with pre-specified weights. Stable differences in mean headway belong to
between-driver heterogeneity and should not be absorbed into a within-driver noise
parameter. Conversely, repeat-to-repeat deviations under the same driver and stimulus
are the variation a stochastic mechanism must reproduce. The decomposition assigns
$24.7\%$ of residual variance within drivers and $75.3\%$ between drivers. Stable
heterogeneity dominates, but the within-driver component remains substantial.

Together these analyses compare observable implications of candidate mechanisms;
they do not causally identify a unique physical process. At 5 s the
empirical residual ACF is $0.569$, compared with $0.219$ for white QIDM,
$0.152$ for 2D-IDM, and $0.641$ for persistent acceleration: the 2D-IDM
action-point mechanism used in the earlier study
\citep{zhou2025calibration} under-predicts the observed persistence by a
wide margin, a contrast that the repeated-measurement design makes visible
and that a single-trajectory comparison could not have exposed.
Persistent disturbances are most consistent with the observed long-lag pattern;
the tested white and 2D-IDM references under-predict persistence at long lags.
Speed and $|\Delta v|$ signatures do not uniquely identify the physical origin.

\subsection{Parameter uncertainty and interval coverage}
\label{sec:exp8}

The last experiment contrasts two uncertainty diagnostics made available by the
repeated design. The first is the local energy-score sandwich scale
$V_1=A^{-1}BA^{-1}$ from Section~\ref{sec:theory-asymptotics}, evaluated with
$N=32$ common-random-number ensembles. The second is the empirical covariance of
the available published per-run QIDM and 2D-IDM estimates over independent
repetitions. These estimates come from the earlier one-step procedure, not from an
identical energy-score estimator, so the comparison is an external stability
benchmark rather than two estimators of the same covariance matrix. Both quantities
refer to single-run parameter variation; no factor-$n$ covariance of a repeated-run
average is used.

Figure~\ref{fig:exp8-cov} and Table~\ref{tab:exp8} compare their diagonal scale and
correlation structure and separately report curvature eigenvalues. Median ratios of
the regularised local sandwich SD to the empirical per-run SD are $3.57$ and $1.36$,
and off-diagonal correlation RMSEs are $0.717$ and $0.777$. Although the median
driver has no negative curvature direction for either model, the minimum raw
eigenvalue across drivers is $-61.87$ for QIDM and $-294.55$ for 2D-IDM. The
fixed-$N$, fixed-common-random-number sandwich is therefore unstable in these fits
and is not validated as a substitute for independent repetitions.

The coverage experiment reconstructs the single-trajectory overlapping-block scheme
used in \citet{zhou2025calibration}, where block-specific parameter estimates were
treated as independent replicates although most sampled blocks overlap. Overlap is
standard in a moving-block bootstrap and is not, by itself, an error; the issue tested
here is the independence interpretation together with the chosen long block length.
Under known synthetic truths the experiment
compares nominal with empirical interval coverage; the repeated-run estimator supplies an independent
benchmark. This is a methodological comparison enabled by the new design: its purpose
is to quantify the effect of overlap and block length, not to attribute discrepancies
to an optimiser. For $B=100,L=500,T=1200$ and integer starts 1--701, the exact
pairwise overlap probability is $0.917375$. At nominal 95\%, the reconstructed
overlapping-block procedure has coverage $0.800$, versus $0.956$ for 12 independent
repetitions. This known-truth result establishes under-coverage for the reconstructed
procedure; it does not attribute the discrepancy to overlap alone.
Figure~\ref{fig:exp8-coverage} shows the complete nominal-versus-empirical coverage
curves, rather than only the 95\% comparison quoted above.

\section{Conclusions and practical guidance}
\label{sec:conclusions}

This paper establishes a distributional calibration theory for stochastic traffic
simulators. The three common multi-run criteria are not interchangeable: MRMean-I
penalises predictive dispersion and collapses a separable stochastic scale; MRMean-II
scores only the ensemble mean and cannot identify a pure dispersion coordinate; and
MRMin retains the closest trajectory, and the value it targets changes with the
number of simulated trajectories. The Gaussian calculations, full $(K,N)$ grid,
joint QIDM and 2D-IDM recoveries, and controlled exact-MLE experiment reproduce these distinct
signatures. These are score-level results for vector-valued simulator outputs; QIDM
and 2D-IDM are validation vehicles rather than limits on applicability.

The resulting decision rule is simple. If the model can evaluate the probability of
each observed state change, use exact MLE. The controlled QIDM experiment is nearly
unbiased and shows that the earlier failure of an approximate Gaussian residual
likelihood should not be blamed on MLE itself. If the model can simulate complete
trajectories but cannot evaluate their probability, use an unbiased estimator of a
proper score. The energy score
has the same expected objective for every ensemble size, although its random argmin
can still have finite-$N$ error. MRMin may recover planted parameters at a favourable
fixed $N$, as \citet{zhou2025calibration} reported, but that numerical success does
not give it an ensemble-size-invariant population minimiser.

Repeated measurements separate calibration quality from model structure. Energy-score
calibration gives the best relative distributional prediction among the four tested
objectives, and calibration on multiple observed trajectories reduces the QIDM
discrepancy. Nevertheless, the nominal 90\% band covers only $42.1\%$ of held-out
trajectories in the 109-fold comparison. The selected QIDM and 2D-IDM specifications
also underrepresent long-lived disturbances. Properness therefore prevents the
calibration objective from preferring a predictive distribution that is too narrow
merely because its spread was not rewarded; it does not guarantee that an inadequate
model family can reproduce every
parameter or dependence feature. Energy-score and variogram diagnostics should be
reported together when temporal dependence is scientifically relevant.

Uncertainty quantification requires the same separation. For the exact score, the
sandwich formula says that covariance decreases as $1/n$ when $n$ independent
trajectories are averaged. A curvature calculation using finitely many simulations
only describes the local shape of that particular numerical objective. Some fitted
objectives are not locally bowl-shaped, and intervals constructed from overlapping
blocks cover the true value less often than their nominal level;
independent repeated trajectories should therefore be used directly when available.
The present data are limited to one site, vehicle pair, and prescribed stimulus, and
MRMin monotonicity remains conditional on decreasing differences for a new nonlinear
simulator. Future work should extend the repeated-measurement design across sites and
develop dependence-sensitive, computationally efficient proper scores. The practical
contribution is thus not a claim that one criterion always recovers every parameter
best, but a framework that distinguishes an invalid target, weak identification,
finite-sample error, and structural model misspecification.

\section*{Declaration of competing interest}
The authors declare no known competing financial interests or personal relationships
that could have appeared to influence the work reported in this paper.

\section*{Data availability}
The processed trajectory data supporting the findings are available from the
corresponding author upon reasonable request. A replication package containing
analysis code, random seeds, checkpoint files, and figure-level numerical data will be
deposited in a public repository upon publication, subject to the terms governing the
source data.
\bibliographystyle{plainnat}
\bibliography{refs}

\ECSwitch
\ECHead{Electronic Companion: Proofs and additional results}
\section{Complete proofs}
\subsection{Proof of Theorem~\ref{thm:mrmean1}}
\begin{proof}
Write
\[
  X - y = \big(X - \mu_\beta\big) - \big(y - \mu_0\big) + \big(\mu_\beta - \mu_0\big).
\]
The three terms on the right have, respectively, expectation zero conditional on
nothing (the first two are centred by construction) and are pairwise uncorrelated:
$X - \mu_\beta$ is independent of $y - \mu_0$ by Assumption~\ref{assump:standing}(b)
and both are independent of the constant vector $\mu_\beta - \mu_0$. Expanding the
squared norm,
\[
\begin{aligned}
  \|X-y\|^2
  ={}& \|X-\mu_\beta\|^2+\|y-\mu_0\|^2
       +\|\mu_\beta-\mu_0\|^2 \\
     &+2\langle X-\mu_\beta,\mu_\beta-\mu_0\rangle \\
     &-2\langle X-\mu_\beta,y-\mu_0\rangle \\
     &-2\langle y-\mu_0,\mu_\beta-\mu_0\rangle .
\end{aligned}
\]
and every cross term has zero expectation under $\E_X \E_y$: the first cross term
because $\E_X(X - \mu_\beta) = 0$, the second because $\E_y(y-\mu_0) = 0$. Taking
$\E_X$ first and then $\E_y$ of what remains gives
\[
\begin{aligned}
  \E_y\E_X\|X-y\|^2
  ={}&\E_X\|X-\mu_\beta\|^2
      +\E_y\|y-\mu_0\|^2 \\
     &+\|\mu_\beta-\mu_0\|^2 \\
  ={}&\tr\Sigma_\beta+\tr\Sigma_0
      +\|\mu_\beta-\mu_0\|^2 .
\end{aligned}
\]
which is \eqref{eq:s1-decomp}. Since $\tr\Sigma_0$ does not depend on $\beta$, the
population objective is, up to an additive constant, $\|\mu_\beta-\mu_0\|^2 +
\tr\Sigma_\beta$. Under separability the mean term can be held at its minimum value
$0$ by fixing $\mu_\beta = \mu_0$ while $\Sigma_\beta$ still ranges over an interval
extending down to values arbitrarily close to $0$; on that slice the objective is
strictly increasing in $\tr\Sigma_\beta$, so its infimum is attained (or approached)
at the lower end of the admissible range for $\Sigma_\beta$. Hence any minimising
sequence has $\tr\Sigma_\beta \to 0$. Because the true $\tr\Sigma_0 > 0$ in general,
the population minimiser does not coincide with $\beta_0$, so $S_1$ is not proper.
\end{proof}

\subsection{Proof of Theorem~\ref{thm:mrmean2}}
\begin{proof}
Write $\mu_\beta-y=(\mu_\beta-\mu_0)-(y-\mu_0)$.  The cross term has zero
expectation by centring, while
$\E\|y-\mu_0\|^2=\tr\Sigma_0$.  This gives
\eqref{eq:s2-decomp}, whose right-hand side contains no $\Sigma_\beta$.
\end{proof}

\subsection{Proof of Theorem~\ref{thm:mrmin}}
\begin{proof}
Let $u_N=\inf\arg\min h_N$ and suppose, to the contrary, that
$u_{N+1}<u_N$.  Optimality gives
\[
 h_N(u_N)\leq h_N(u_{N+1}),\qquad
 h_{N+1}(u_{N+1})\leq h_{N+1}(u_N).
\]
Subtracting implies
\[
 h_{N+1}(u_{N+1})-h_N(u_{N+1})
 \leq h_{N+1}(u_N)-h_N(u_N),
\]
which contradicts strict decrease because $u_{N+1}<u_N$.  Hence
$u_{N+1}\geq u_N$, and exponentiation preserves the order.  This is the
decreasing-differences form of the Topkis argument.  The numerical check of its
premise is reported in Section~\ref{sec:exp1}.
\end{proof}

\subsection{Proof of Proposition~\ref{prop:mrmin-dimension}}
\begin{proof}
Let $R_N(y)=\min_{i\le N}\|X_i-y\|$. Conditional on $Y=y$,
\[
 \Pr\{R_N(y)>r\mid y\}
 =\big[1-P_\sigma\{B(y,r)\}\big]^N.
\]
The stated local expansion therefore gives
$Nf_\sigma(y)V_KR_N(y)^K\Rightarrow\operatorname{Exp}(1)$. Moment convergence,
under the stated uniform-integrability condition, yields
\[
 \E\{R_N(y)^2\mid y\}\sim
 \frac{\Gamma(1+2/K)}{\{NV_Kf_\sigma(y)\}^{2/K}}.
\]
Averaging over $Y$ gives \eqref{eq:mrmin-nn-asymptotic}. For the Gaussian toy,
\[
 f_\sigma(y)=(2\pi\sigma^2)^{-K/2}
 \exp\{-\|y-\mu_0\|^2/(2\sigma^2)\},
\]
Gaussian integration yields
\[
 \E\{f_\sigma(Y)^{-2/K}\}
 =2\pi\sigma^2
 \left(1-\frac{2\sigma_0^2}{K\sigma^2}\right)^{-K/2}.
\]
Differentiating its logarithm with respect to $\sigma^2$ gives the unique
stationary point $\sigma^2=\sigma_0^2(1+2/K)$, which is its minimum.
\end{proof}

\subsection{Proof of Proposition~\ref{prop:es-unbiased}}
\begin{proof}
For the first term, each $X_i \sim P$, so $\E\|X_i - y\|= \E\|X-y\|$ termwise
and averaging preserves this expectation. For the second term, the sum
$\sum_{i \ne j}$ ranges over exactly $N(N-1)$ ordered pairs with $i \ne j$; by
independence and identical distribution of the $X_i$, every such pair satisfies
$\E\|X_i - X_j\| = \E\|X - X'\|$ for independent $X, X' \sim P$. Hence
\[
\begin{aligned}
 &\E\left[\frac{1}{2N(N-1)}
       \sum_{i\ne j}\|X_i-X_j\|\right]\\
 &\quad=\frac{N(N-1)}{2N(N-1)}\,
       \E\|X-X'\|\\
 &\quad=\tfrac12\,\E\|X-X'\| .
\end{aligned}
\]
Subtracting gives $\E\,\widehat{\mathrm{ES}}_N = \E\|X-y\| - \tfrac12\E\|X-X'\| =
\mathrm{ES}(P,y)$, for every $N \ge 2$.
\end{proof}

\section{Calibration literature and experimental design}
\begin{table*}[t]
\centering
\caption{Conceptual progression from \citet{zhou2025calibration} to the present study.}
\label{tab:progression}
\small
\begin{tabular}{@{}>{\raggedright\arraybackslash}p{0.16\linewidth}
                    >{\raggedright\arraybackslash}p{0.35\linewidth}
                    >{\raggedright\arraybackslash}p{0.39\linewidth}@{}}
\toprule
Aspect & Zhou et al. (2025) & Present study \\
\midrule
Evaluation principle & Parameter recovery in finite synthetic designs & Strict propriety and the long-run target of each criterion \\
MRMean & Run-averaged errors suppress stochasticity & MRMean-I collapses separable dispersion; MRMean-II leaves pure dispersion unidentified \\
MRMin & Best-of-$N$ method recommended after accurate fixed-$N$ grid recovery & Nearest-neighbour rule with an $N$-indexed target; monotonicity is conditional on decreasing differences \\
Random seed & Implicit variable selected in a nested optimisation & Monte Carlo draw used to integrate over $P_\beta$, not part of the estimand \\
Likelihood & Gaussian trajectory-residual construction for QIDM/2D-IDM, alongside a separate model-specific analytic likelihood & Approximate residual likelihood separated from a correctly specified QIDM transition likelihood \\
Synthetic design & Truth selected from a discretised candidate space and recovered by traversal & Off-grid truth, continuous joint calibration, multiple regimes, and observation error \\
Uncertainty & Overlapping blocks sampled from one trajectory & Independent repetitions, M-estimation comparison, and known-truth coverage \\
Validation & Error summaries and prediction profiles & Held-out proper scores, coverage, PIT, dependence, and mechanism diagnostics \\
Recommendation & Adopt MRMin to preserve stochasticity & Use an exact likelihood when available; otherwise use an unbiased proper score \\
\bottomrule
\end{tabular}
\end{table*}

\begin{table}[t]
\centering
\caption{Multi-run calibration practice for stochastic CF models, reclassified
by the scoring rule each procedure implicitly applies to the predictive
distribution $P_\beta$, rather than by measure of performance or optimiser.
$N$ denotes the number of simulated trajectories per calibration evaluation.}
\label{tab:rw-implicit-scores}
\begin{tabular}{p{0.24\linewidth}p{0.68\linewidth}}
\toprule
Implicit scoring rule & Representative use in the CF literature \\
\midrule
Mean of run-wise error (MRMean-I) &
Multiple simulated runs per parameter draw, each compared with the
observation and averaged; the form analysed as $S_1$ in
Section~\ref{sec:mrmean1} and used as one of the two MRMean variants in
\citet{zhou2025calibration}. \\
Error of the run-averaged trajectory (MRMean-II) &
Multiple simulated runs averaged into a single mean trajectory before
comparison with the observation; the second MRMean variant in
\citet{zhou2025calibration}, analysed as $S_2$ in Section~\ref{sec:mrmean2}. \\
Best of $N$ (MRMin) &
Only the closest of $N$ simulated trajectories is retained; introduced and
evaluated in \citet{zhou2025calibration}, analysed as $S_3^{(N)}$ in
Section~\ref{sec:mrmin} and identified there with the ABC acceptance rule of
Section~\ref{sec:rw-sbi}. \\
Likelihood under an assumed residual law &
A parametric error distribution (e.g.\ Gaussian residuals with a
data-source- or sampling-interval-dependent variance) is assumed and
maximised, as in the local maximum-likelihood calibration of
\citet{kesting2008calibrating} and the multi-source calibration of
\citet{hoogendoorn2010calibration}. \\
Exact transition likelihood &
The model provides a formula for the probability density of the next observed
state given the current state and parameters. This density can therefore be
evaluated directly rather than approximated from simulated runs, as in the
two-regime stochastic Newell inference of
\citet{xu2020statistical} and the Langevin formulation of
\citet{ngoduy2019langevin}. \\
\bottomrule
\end{tabular}
\end{table}

\section{Additional synthetic results}

\begin{table*}[t]
\centering
\scriptsize
\caption{Numerical verification of the analytic results.}
\label{tab:theory-verification}
\begin{tabular}{p{0.15\linewidth}p{0.21\linewidth}p{0.42\linewidth}p{0.09\linewidth}}
\toprule
Claim & Setting & Numerical result & Verdict \\
\midrule
Theorem~\ref{thm:mrmean1}
& $K=10$, 100 scales in $[10^{-4},3]$
& Objective strictly increases with $\sigma$ and approaches $K\sigma_0^2$ at zero.
& Supported \\
Theorem~\ref{thm:mrmean2}
& $K=10$, 101 scales in $[0,5]$
& Regression slope is zero to machine precision; objective range $<10^{-12}$.
& Supported \\
Mean run-wise RMSE
& $K\in\{1,10,100,1200\}$, 121 scales in $[0,2.5]$
& $c_K\sqrt{\sigma^2+\sigma_0^2}$ is strictly increasing and is minimised at
$\sigma=0$ for every $K$; Monte Carlo agrees within $8\times10^{-3}$.
& Supported \\
Theorem~\ref{thm:mrmin}, decreasing differences
& $K\in\{1,10,100,1200\}$; four adjacent $N$ pairs; 120 positive scales
& All adjacent $u$-differences are negative for $K\geq10$. At $K=1$, the
fractions are $0.815$, $0.487$, $0.412$, and $0.378$ as $N$ increases.
& Conditional; mixed grid \\
MRMin large-$N$ limit
& $K=1$, extended through $N=10^6$
& The minimiser reaches $1.731$--$1.732$ by $N=10^4$--$10^5$ and the large-$N$ value is $\sqrt3=1.73205$.
& Finite saturation \\
High-dimensional finite-budget approximation
& $K\in\{1,10,100,1200\}$, $N=10,\ldots,2000$
& $\sqrt{\log N}$ has the largest $R^2$ for every $K$; full comparison is given below.
& Empirically supported in range \\
Energy-score propriety
& $K\in\{1,2,10,100\}$
& Golden-section minimisers equal $\sigma_0$ within $2\times10^{-5}$; direct Monte Carlo agrees within $8\times10^{-3}$.
& Supported \\
ES with $1/[N(N-1)]$
& 35 ensemble sizes, $N=5,\ldots,2000$
& Expected slope against $\log N$ and objective range are zero to machine precision.
& Unbiased \\
ES with $1/N^2$
& Same ensemble sizes
& Log-$N$ slope is nonzero ($p<10^{-6}$); bias times $N$ is constant to machine precision.
& $O(N^{-1})$ bias \\
\bottomrule
\end{tabular}
\end{table*}

\begin{figure}[t]
\centering
\includegraphics[width=0.78\linewidth]{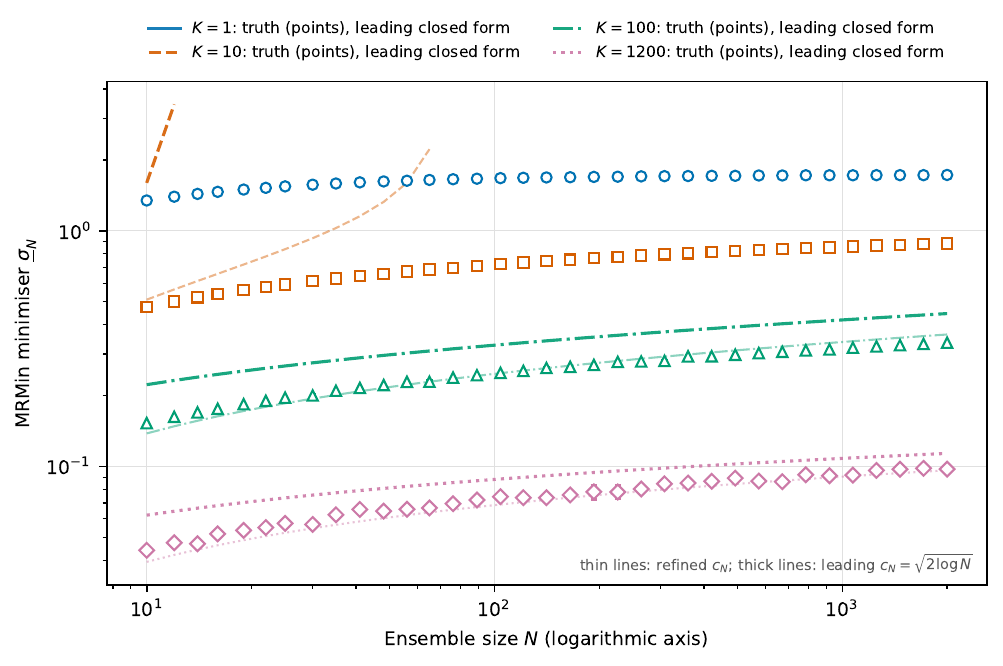}
\caption{MRMin minimiser $\underline\sigma_N$ against ensemble size on a
logarithmic $N$ axis for $K\in\{1,10,100,1200\}$. Points are numerical
minimisers; lines are the fitted $a+b\sqrt{\log N}$ forms, with finite-range
log--log exponent $\widehat\alpha$ annotated. Both axes are logarithmic. Thin
short segments show the historical $N^{1/K}$ reference; the fitted curves are
interpreted as pre-asymptotic because the corrected theory has a finite
large-$N$ limit.}
\label{fig:exp1-dimension}
\end{figure}

\begin{table*}[t]
\centering
\caption{Distribution of calibrated stochastic parameters in the
$N$-sensitivity experiment. Spearman coefficients are calculated from the
seven median estimates across $N$ and each coefficient is shown once in a cell
spanning the corresponding seven $N$ rows; ``--'' denotes an undefined coefficient
when all medians are identical. Median, IQR, and bias are printed to four decimals:
the small 2D-IDM changes with $N$ are real but were hidden by three-decimal rounding,
whereas QIDM MRMean-I is exactly constant because every median is at the zero bound.
Model and criterion labels are likewise merged over their corresponding row blocks.}
\label{tab:exp2}
\scriptsize
{\renewcommand{\arraystretch}{0.94}
\begin{tabular}{lllrrrr}
\toprule
Model & Criterion & $N$ & Median & IQR & Bias & Spearman $\rho$ \\
\midrule
\multirow{28}{*}{2D-IDM} & \multirow{7}{*}{Energy score} & 20 & 0.7282 & 0.0166 & -0.0018 & \multirow{7}{*}{0.321} \\
 &  & 50 & 0.7279 & 0.0173 & -0.0021 &  \\
 &  & 100 & 0.7278 & 0.0169 & -0.0022 &  \\
 &  & 200 & 0.7276 & 0.0173 & -0.0024 &  \\
 &  & 500 & 0.7281 & 0.0175 & -0.0019 &  \\
 &  & 1000 & 0.7282 & 0.0169 & -0.0018 &  \\
 &  & 2000 & 0.7283 & 0.0175 & -0.0017 &  \\
 & \multirow{7}{*}{MRMean-I} & 20 & 0.7254 & 0.0166 & -0.0046 & \multirow{7}{*}{0.464} \\
 &  & 50 & 0.7251 & 0.0173 & -0.0049 &  \\
 &  & 100 & 0.7251 & 0.0168 & -0.0049 &  \\
 &  & 200 & 0.7248 & 0.0173 & -0.0052 &  \\
 &  & 500 & 0.7252 & 0.0175 & -0.0048 &  \\
 &  & 1000 & 0.7256 & 0.0171 & -0.0044 &  \\
 &  & 2000 & 0.7254 & 0.0176 & -0.0046 &  \\
 & \multirow{7}{*}{MRMean-II} & 20 & 0.7279 & 0.0156 & -0.0021 & \multirow{7}{*}{0.536} \\
 &  & 50 & 0.7281 & 0.0169 & -0.0019 &  \\
 &  & 100 & 0.7279 & 0.0159 & -0.0021 &  \\
 &  & 200 & 0.7275 & 0.0163 & -0.0025 &  \\
 &  & 500 & 0.7280 & 0.0165 & -0.0020 &  \\
 &  & 1000 & 0.7283 & 0.0161 & -0.0017 &  \\
 &  & 2000 & 0.7282 & 0.0166 & -0.0018 &  \\
 & \multirow{7}{*}{MRMin} & 20 & 0.7267 & 0.0230 & -0.0033 & \multirow{7}{*}{0.071} \\
 &  & 50 & 0.7263 & 0.0184 & -0.0037 &  \\
 &  & 100 & 0.7253 & 0.0206 & -0.0047 &  \\
 &  & 200 & 0.7243 & 0.0201 & -0.0057 &  \\
 &  & 500 & 0.7254 & 0.0259 & -0.0046 &  \\
 &  & 1000 & 0.7258 & 0.0247 & -0.0042 &  \\
 &  & 2000 & 0.7267 & 0.0238 & -0.0033 &  \\
\multirow{28}{*}{QIDM} & \multirow{7}{*}{Energy score} & 20 & 0.3559 & 0.2377 & -0.1141 & \multirow{7}{*}{0.714} \\
 &  & 50 & 0.3531 & 0.2532 & -0.1169 &  \\
 &  & 100 & 0.3585 & 0.2140 & -0.1115 &  \\
 &  & 200 & 0.3744 & 0.2110 & -0.0956 &  \\
 &  & 500 & 0.3821 & 0.2209 & -0.0879 &  \\
 &  & 1000 & 0.3768 & 0.2276 & -0.0932 &  \\
 &  & 2000 & 0.3692 & 0.2181 & -0.1008 &  \\
 & \multirow{7}{*}{MRMean-I} & 20 & 0.0000 & 0.0000 & -0.4700 & \multirow{7}{*}{--} \\
 &  & 50 & 0.0000 & 0.0000 & -0.4700 &  \\
 &  & 100 & 0.0000 & 0.0000 & -0.4700 &  \\
 &  & 200 & 0.0000 & 0.0000 & -0.4700 &  \\
 &  & 500 & 0.0000 & 0.0000 & -0.4700 &  \\
 &  & 1000 & 0.0000 & 0.0000 & -0.4700 &  \\
 &  & 2000 & 0.0000 & 0.0000 & -0.4700 &  \\
 & \multirow{7}{*}{MRMean-II} & 20 & 0.1065 & 0.4944 & -0.3635 & \multirow{7}{*}{0.893} \\
 &  & 50 & 0.2305 & 0.5190 & -0.2395 &  \\
 &  & 100 & 0.2458 & 0.4693 & -0.2242 &  \\
 &  & 200 & 0.2644 & 0.4664 & -0.2056 &  \\
 &  & 500 & 0.2751 & 0.4855 & -0.1949 &  \\
 &  & 1000 & 0.2706 & 0.5034 & -0.1994 &  \\
 &  & 2000 & 0.2709 & 0.5053 & -0.1991 &  \\
 & \multirow{7}{*}{MRMin} & 20 & 0.1294 & 0.1135 & -0.3406 & \multirow{7}{*}{1.000} \\
 &  & 50 & 0.1386 & 0.1182 & -0.3314 &  \\
 &  & 100 & 0.1621 & 0.1288 & -0.3079 &  \\
 &  & 200 & 0.1807 & 0.1640 & -0.2893 &  \\
 &  & 500 & 0.1845 & 0.1683 & -0.2855 &  \\
 &  & 1000 & 0.2301 & 0.2103 & -0.2399 &  \\
 &  & 2000 & 0.2378 & 0.2102 & -0.2322 &  \\
\bottomrule
\end{tabular}

}%
\end{table*}

\begin{figure}[t]
\centering
\includegraphics[width=0.86\linewidth]{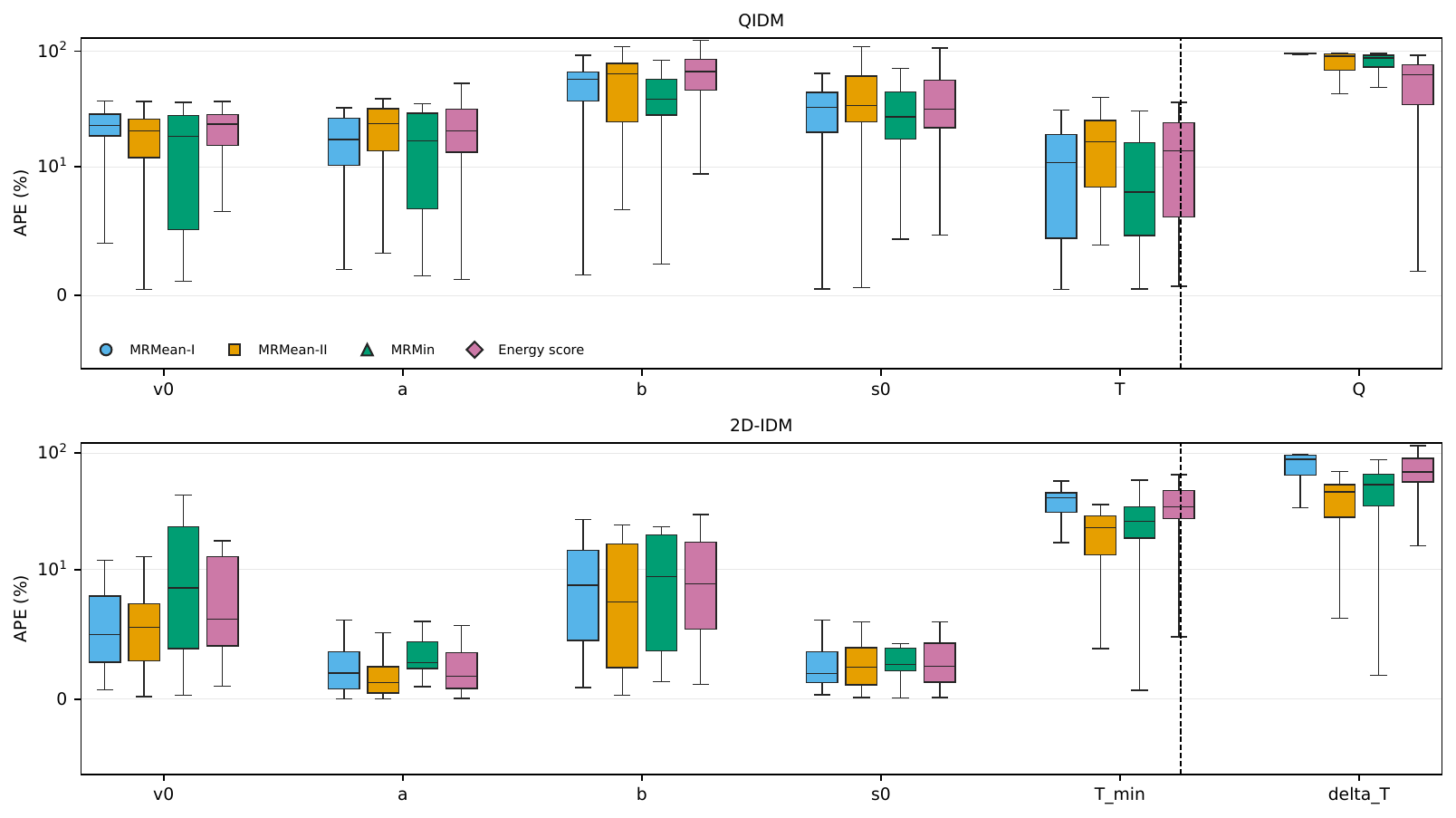}
\caption{Absolute percentage error in joint parameter recovery by parameter,
criterion, and stochastic CF model. Boxes summarise independent off-grid
synthetic calibrations; stochastic coordinates are separated visually from
deterministic coordinates. Panels are QIDM and 2D-IDM; each box
contains ten independent calibrations per test.}
\label{fig:exp3}
\end{figure}

\begin{table*}[t]
\centering
\caption{Joint parameter recovery under three complementary summaries.
Deterministic AAPE averages APE over the five deterministic coordinates;
stochastic APE reports the single stochastic coordinate ($Q$ or $\Delta T$);
range MAE averages absolute errors after division by each prespecified parameter
range. Values average ten independent calibrations. Lower is better.}
\label{tab:exp3}
\scriptsize
\begin{tabular}{llrrrrrr}
\toprule
& & \multicolumn{3}{c}{Test I} & \multicolumn{3}{c}{Test II} \\
\cmidrule(lr){3-5}\cmidrule(lr){6-8}
Model & Criterion & Det. AAPE & Stoch. APE & Range MAE
& Det. AAPE & Stoch. APE & Range MAE \\
\midrule
\multirow{4}{*}{QIDM}
& MRMean-I & 30.5 & 95.0 & 20.6 & 29.8 & 95.2 & 20.0 \\
& MRMean-II & 30.4 & 75.6 & 19.7 & 33.1 & 98.8 & 21.1 \\
& MRMin & 25.7 & 81.5 & \textbf{17.4} & \textbf{26.8} & 79.0 & \textbf{17.5} \\
& Energy score & 33.1 & \textbf{44.4} & 20.0 & 33.6 & \textbf{65.1} & 20.1 \\
\midrule
\multirow{4}{*}{2D-IDM}
& MRMean-I & 12.7 & 77.3 & 12.9 & 16.2 & 81.9 & 14.8 \\
& MRMean-II & \textbf{8.0} & \textbf{35.2} & \textbf{7.5} & \textbf{9.6} & 48.1 & \textbf{9.1} \\
& MRMin & 12.8 & 53.3 & 11.9 & 12.4 & \textbf{45.6} & 10.9 \\
& Energy score & 13.8 & 75.6 & 13.9 & 13.9 & 66.9 & 12.7 \\
\bottomrule
\end{tabular}
\end{table*}

\section{Additional repeated-measurement diagnostics}
\begin{figure}[t]
\centering
\includegraphics[width=0.78\linewidth]{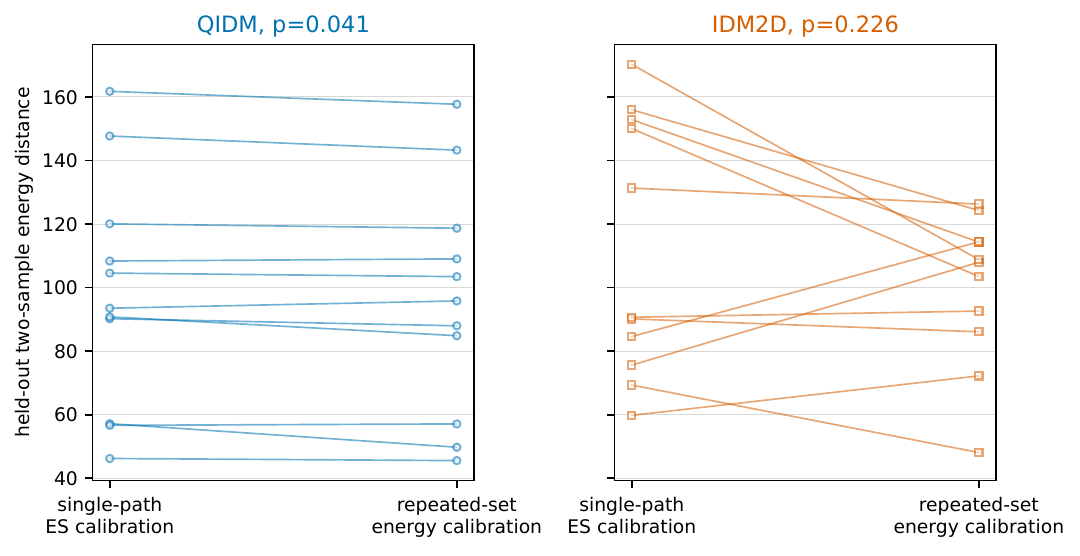}
\caption{Two-sample energy distance between simulated and observed trajectory
ensembles by driver and stochastic CF model. Each point compares held-out observed
repetitions with a newly simulated ensemble; paired coloured symbols contrast
single-path ES calibration with repeated-set energy-distance calibration. The last
two available runs per driver are held out, and uncertainty is assessed by exact
paired sign permutation across 11 drivers.}
\label{fig:exp5}
\end{figure}

\begin{table}[t]
\centering
\caption{Regression of across-run spacing standard deviation on state and action
variables. Coefficients are standardised; two-way clustering uses driver and 5-s time
blocks. Incremental $R^2$ removes one predictor at a time.}
\label{tab:exp7}
\begin{tabular}{lrrrr}
\toprule
Predictor & Coefficient & Standard error & $p$-value & $\Delta R^2$ \\
\midrule
Follower speed $v$ & 0.472 & 0.082 & 8.32e-09 & 0.117 \\
Absolute relative speed $|\Delta v|$ & 0.147 & 0.053 & 0.00587 & 0.020 \\
Acceleration $a$ & -0.038 & 0.043 & 0.377 & 0.001 \\
Lead-trajectory deviation & 0.252 & 0.108 & 0.0191 & 0.034 \\
\bottomrule

\end{tabular}
\end{table}

\begin{figure}[t]
\centering
\includegraphics[width=0.82\linewidth]{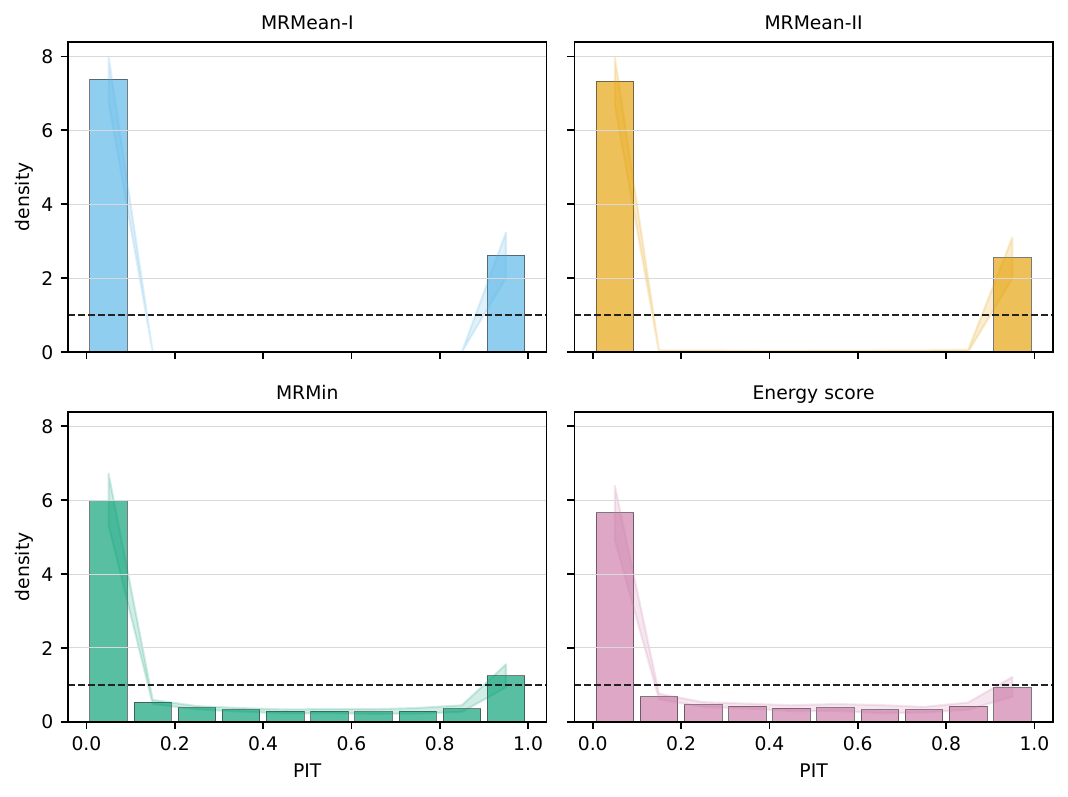}
\caption{Probability integral transform (PIT) histograms for held-out spacing under
each calibration criterion. The horizontal reference denotes uniformity. Ten
equal-width bins are used and bootstrap envelopes resample whole drivers rather than
time points or folds.}
\label{fig:exp6-pit}
\end{figure}

\begin{figure}[t]
\centering
\includegraphics[width=0.68\linewidth]{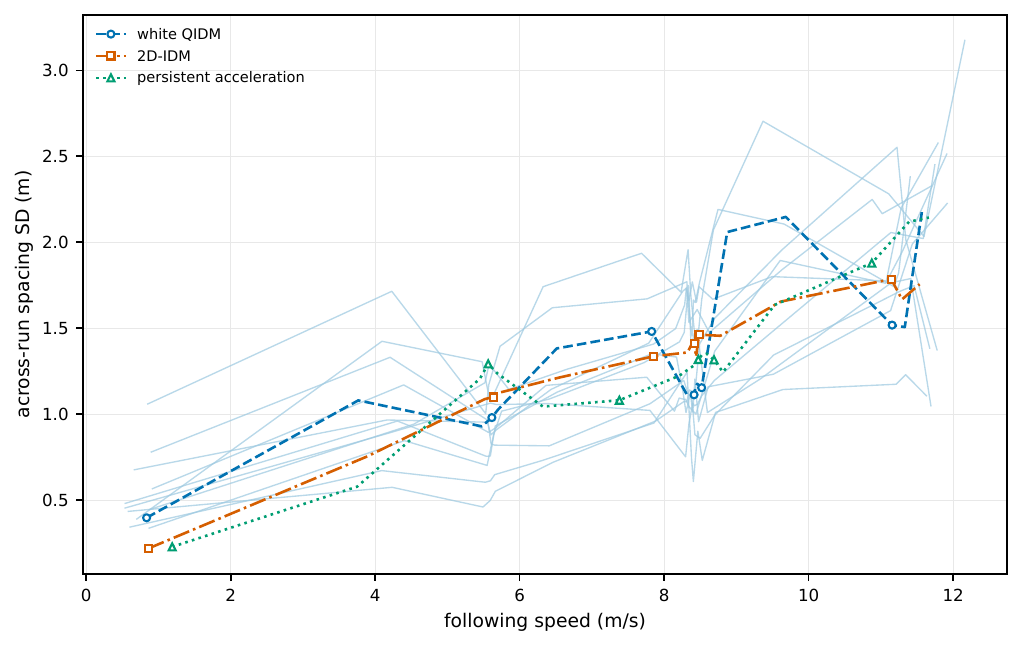}
\caption{Across-run standard deviation of spacing against follower speed, with
overlays for three candidate stochastic-mechanism references: white-noise QIDM,
2D-IDM, and an AR(1) persistent-acceleration innovation with $\rho=0.92$ per
0.1-s update. ``Persistent acceleration'' means persistence of the positive or
negative acceleration innovation, not continuous acceleration of the vehicle.
Driver-level empirical curves are shown in light blue. Simulated follower speed is
reconstructed from simulated position, and each simulated spacing-SD curve is
vertically normalised to the empirical mean SD; colours therefore compare shape
rather than absolute variance. Main curves use raw
10-Hz alignment; widths 5, 11, and 21 samples and speed tertiles are sensitivity checks.}
\label{fig:exp7-signatures}
\end{figure}

\begin{figure}[t]
\centering
\includegraphics[width=0.78\linewidth]{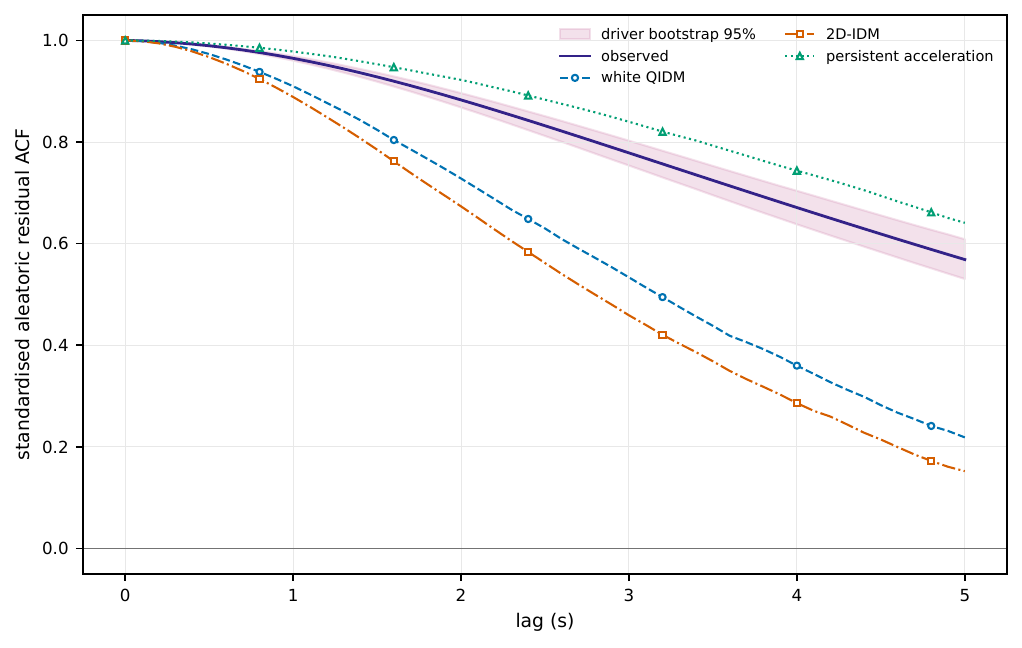}
\caption{How long unexplained acceleration persists. The vertical axis is the
autocorrelation of the standardised within-driver acceleration residual at each lag.
The solid empirical estimate is accompanied by driver-level uncertainty; reference
envelopes are generated under fitted white-noise, 2D-IDM, and action-persistence
mechanisms. Residual correlation outside an envelope identifies temporal structure
not represented by that mechanism. Residuals are deviations from the within-driver
mean spacing trajectory, lags span 0--5 s, and bands resample whole drivers.}
\label{fig:exp7-acf}
\end{figure}

\begin{figure}[t]
\centering
\includegraphics[width=0.82\linewidth]{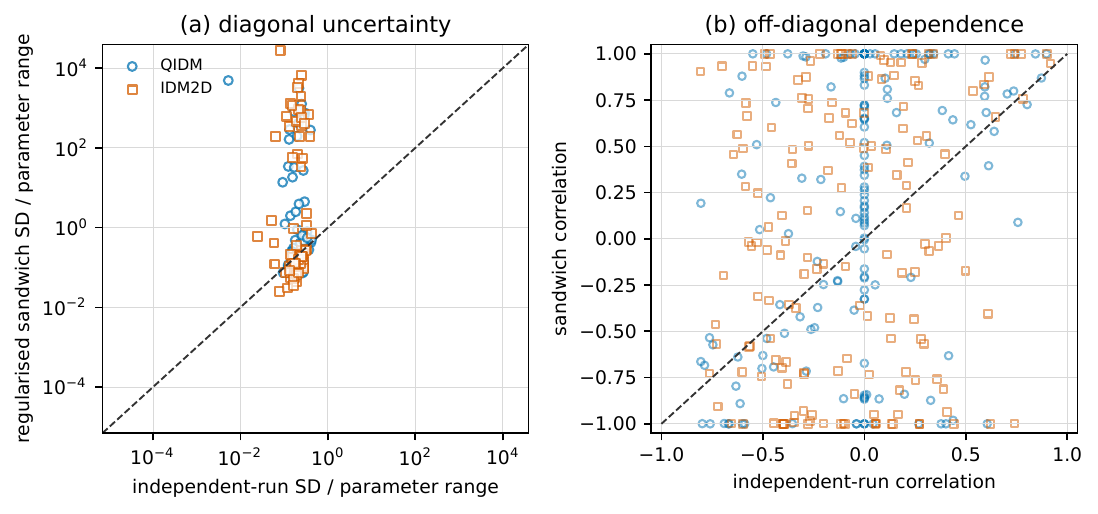}
\caption{Local energy-score sandwich scale and empirical parameter stability across
independent repeated runs. The two routes use different estimators and are therefore
an external diagnostic comparison, not interchangeable estimates of one covariance
matrix. Diagonal panels compare single-run standard-deviation scales; off-diagonal
panels compare correlations, with a $45^\circ$ reference line. Coloured symbols
distinguish QIDM and 2D-IDM across all 11 drivers; regularised sandwich values are
displayed only to diagnose the local approximation.}
\label{fig:exp8-cov}
\end{figure}

\begin{figure}[t]
\centering
\includegraphics[width=0.76\linewidth]{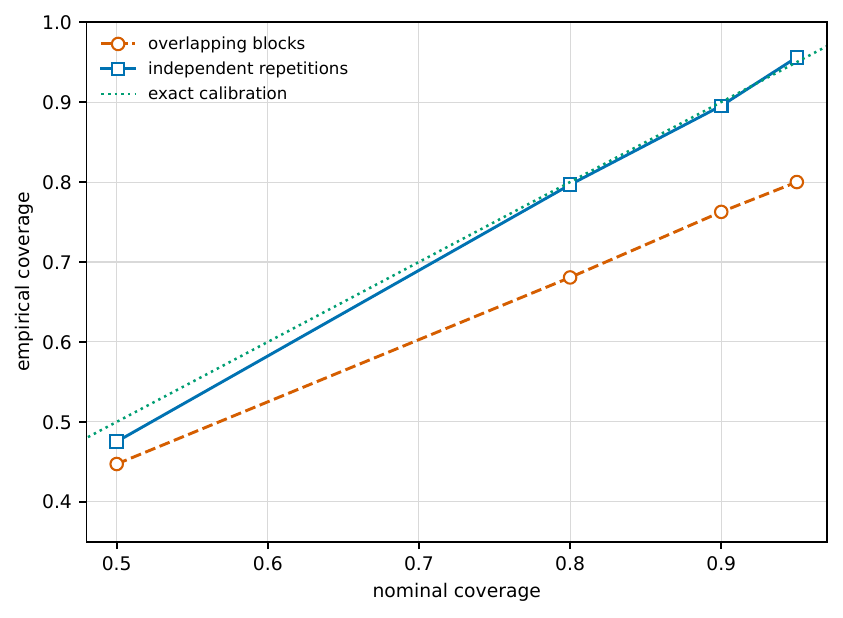}
\caption{Empirical coverage of nominal confidence intervals under the
single-trajectory overlapping-block scheme and the independent-repetition scheme.
Coverage is plotted against nominal level; the diagonal denotes exact calibration and
colours and symbols distinguish the two schemes. The simulation uses 1500 data sets,
$B=100$, $L=500$, $T=1200$, AR(1) correlation $0.92$, and 12 independent series.}
\label{fig:exp8-coverage}
\end{figure}

\begin{table*}[t]
\centering
\caption{Real repeated-run score-curvature diagnosis. The SD ratio compares the
regularised, fixed-common-random-number energy-score sandwich scale with the empirical
SD of published per-run estimates from a different procedure; it is a stability
benchmark, not a covariance calibration ratio. Correlation RMSE compares all
off-diagonal entries. Negative raw eigenvalues invalidate the local interpretation.}
\label{tab:exp8}
\begin{tabular}{lrrrrr}
\toprule
Model & Drivers & Median negative eig. & Minimum eig. & Correlation RMSE & Median SD ratio \\
\midrule
QIDM & 11 & 0.0 & -61.87 & 0.717 & 3.57 \\
2D-IDM & 11 & 0.0 & -294.55 & 0.777 & 1.36 \\
\bottomrule

\end{tabular}
\end{table*}

\end{document}